\def\llncs{1}
\def\fullpage{0}
\def\anonymous{0}
\def\authnote{1}
\def\notxfont{0}
\def\submission{1}
\def\cameraready{1}
\def\arxiv{0}
\definecolor{darkblue}{rgb}{0,0,0.6}
\definecolor{darkgreen}{rgb}{0,0.5,0}
\definecolor{maroon}{rgb}{0.5,0.1,0.1}
\definecolor{dpurple}{rgb}{0.2,0,0.65}
\newtheoremstyle{thicktheorem}%
{\topsep}
{\topsep}
{\itshape}{}%
{\bfseries}%
{.}
{ }%
{\thmname{#1}\thmnumber{ #2}%
		\thmnote{ (#3)}%
}
\newtheoremstyle{remark}
{\topsep}
{\topsep}
	{}
	{}
	{}
	{.}
	{ }
	{\textit{\thmname{#1}}\thmnumber{ #2}
			\thmnote{ (#3)}%
	}
	\theoremstyle{thicktheorem}
	\newtheorem{theorem}{Theorem}[section]
	\newtheorem{lemma}[theorem]{Lemma}
	\newtheorem{definition}[theorem]{Definition}
	\newtheorem{game}[theorem]{Game}
	\theoremstyle{remark}
	\newtheorem{remark}[theorem]{Remark}
	\crefname{theorem}{Theorem}{Theorems}
	\crefname{assumption}{Assumption}{Assumptions}
	\crefname{construction}{Construction}{Constructions}
	\crefname{corollary}{Corollary}{Corollaries}
	\crefname{conjecture}{Conjecture}{Conjectures}
	\crefname{definition}{Definition}{Definitions}
	\crefname{exmaple}{Example}{Examples}
	\crefname{experiment}{Experiment}{Experiments}
	\crefname{counterexample}{Counterexample}{Counterexamples}
	\crefname{lemma}{Lemma}{Lemmata}
	\crefname{observation}{Observation}{Observations}
	\crefname{proposition}{Proposition}{Propositions}
	\crefname{remark}{Remark}{Remarks}
	\crefname{claim}{Claim}{Claims}
	\crefname{fact}{Fact}{Facts}
	\crefname{note}{Note}{Notes}
 \crefname{appendix}{App.}{Appendices}
 \crefname{section}{Sec.}{Sections}
\renewcommand*{\backref}[1]{}
	\renewcommand*{\backref}[1]{(Cited on page~#1.)}
\newcommand{\mor}[1]{}
\newcommand{\taiga}[1]{}
\newcommand{\ryo}[1]{}
\newcommand{\takashi}[1]{}
\newcommand{\mor}[1]{$\ll$\textsf{\color{red} Tomoyuki: { #1}}$\gg$}
\newcommand{\taiga}[1]{$\ll$\textsf{\color{magenta} Taiga: { #1}}$\gg$}
\newcommand{\takashi}[1]{$\ll$\textsf{\color{orange} Takashi: { #1}}$\gg$}
\newcommand{\ryo}[1]{$\ll$\textsf{\color{darkgreen} Ryo: { #1}}$\gg$}
\newcommand{\BQP}{\mathbf{BQP}}
\newcommand{\QMA}{\mathbf{QMA}}
\newcommand{\NP}{\mathbf{NP}}
\newcommand{\StateGen}{\mathsf{StateGen}}
\newcommand{\hyb}{\mathsf{Hyb}}
\newcommand{\Prob}{\mbox{Pr}}
\newcommand{\cA}{\mathcal{A}}
\newcommand{\cC}{\mathcal{C}}
\def\makeuppercase#1{
\expandafter\newcommand\csname tl#1\endcsname{\widetilde{#1}}
}
\def\makelowercase#1{
\expandafter\newcommand\csname tl#1\endcsname{\widetilde{#1}}
}
\newcommand*{\algo}[1]{\ensuremath{\mathsf{#1}}}
\newenvironment{boxfig}[2]{\begin{figure}[#1]\fbox{\begin{minipage}{0.97\linewidth}
                        \vspace{0.2em}
                        \makebox[0.025\linewidth]{}
                        \begin{minipage}{0.95\linewidth}
            {{
                        #2 }}
                        \end{minipage}
                        \vspace{0.2em}
                        \end{minipage}}}{\end{figure}}
\newcommand{\bit}{\{0,1\}}
\newcommand{\Gen}{\algo{Gen}}
\newcommand{\Sign}{\algo{Sign}}
\newcommand{\negl}{{\mathsf{negl}}}
\let\oldvec\vec
\let\vec\oldvec
\renewcommand*\l@author[2]{}
\renewcommand*\l@title[2]{}
\theoremstyle{remark}
\title{
\textbf{Quantum Commitments and Signatures without One-Way Functions}
}
\begin{document}

\ifnum\anonymous=1
\author{\empty}\institute{\empty}
\else
%
%
\ifnum\llncs=1
\author{
	Tomoyuki Morimae\inst{1} \and Takashi Yamakawa\inst{1,2}
}
\institute{
	Yukawa Institute for Theoretical Physics, Kyoto University, Japan \and NTT Corporation, Japan
}
\else
%
%
\ifnum\noclassic=1
\author[1]{Ryo Nishimaki}
\author[1]{Takashi Yamakawa}
\affil[1]{{\small NTT Secure Platform Laboratories, Tokyo, Japan}\authorcr{\small \{ryo.nishimaki.zk,takashi.yamakawa.ga\}@hco.ntt.co.jp}}
\else
\author[1]{Taiga Hiroka}
\author[1,2]{\hskip 1em Tomoyuki Morimae}
\author[3]{\hskip 1em Ryo Nishimaki}
\author[3]{\hskip 1em Takashi Yamakawa}
\affil[1]{{\small Yukawa Institute for Theoretical Physics, Kyoto University, Japan}\authorcr{\small \{taiga.hiroka,tomoyuki.morimae\}@yukawa.kyoto-u.ac.jp}}
\affil[2]{{\small PRESTO, JST, Japan}}
\affil[3]{{\small NTT Secure Platform Laboratories, Tokyo, Japan}\authorcr{\small \{ryo.nishimaki.zk,takashi.yamakawa.ga\}@hco.ntt.co.jp}}
\fi
\renewcommand\Authands{, }
\fi 
\fi

\ifnum\llncs=1
\date{}
\else
\date{\today}
\fi
\maketitle


\begin{abstract}
In the classical world, the existence of commitments is equivalent to the existence of one-way functions.
In the quantum setting, on the other hand, commitments are not known to imply one-way functions,
but all known constructions of quantum commitments use at least one-way functions.
Are one-way functions really necessary for commitments in the quantum world?
In this work, we show that non-interactive quantum commitments (for classical messages) with
computational hiding and statistical binding  
exist if pseudorandom quantum states exist.
Pseudorandom quantum states are sets of quantum states that are efficiently generated but their polynomially many copies are computationally
indistinguishable from the same number of copies of Haar random states [Ji, Liu, and Song, CRYPTO 2018].
It is known that pseudorandom quantum states exist even if $\BQP=\QMA$
(relative to a quantum oracle) [Kretschmer, TQC 2021],
which means that
pseudorandom quantum states can exist even if
no quantum-secure classical cryptographic primitive exists.
Our result therefore shows that quantum commitments can exist
even if no quantum-secure classical cryptographic primitive exists.
In particular, quantum commitments can exist even if
no quantum-secure one-way function exists.
In this work, we also consider digital signatures, which are other fundamental primitives in cryptography.
We show that one-time secure digital signatures with quantum public keys exist if pseudorandom quantum states exist.
In the classical setting, the existence of digital signatures is equivalent to the existence of one-way functions.
Our result, on the other hand, shows that quantum signatures can exist even if no quantum-secure
classical cryptographic primitive (including quantum-secure one-way functions) exists.
\end{abstract}

\ifnum\cameraready=1
\else
\ifnum\llncs=1
\else
\newpage
  \setcounter{tocdepth}{2}      
  \setcounter{secnumdepth}{2}   
  \setcounter{page}{0}          
  \tableofcontents
  \thispagestyle{empty}
  \clearpage
\fi
\fi

\section{Introduction}
\subsection{Background}
Commitments~\cite{C:Blum81} are one of the most central primitives in cryptography.
Assume that a sender wants to commit a message $m$ to a receiver. The sender
encrypts it and sends it to the receiver.
Later, the sender sends a key so that the receiver can open the message $m$.
Before the sender sends the key, the receiver should not be able to learn the message $m$, which is called
{\it hiding}.
Furthermore, the sender should not be able to change the message later once the sender commits it, which is called
{\it binding}.
(Imagine that the sender's message is put in a safe box and sent to the receiver. 
The receiver cannot open it until the receiver receives the key, and the sender cannot change the
message in the safe box once it is sent to the receiver.)
In cryptography, there are two types of definitions for security. One is statistical security
and the other is computational security. 
Statistical security means that 
it is secure against any computationally-unbounded adversary, while computational security
means that it is secure against adversaries restricted to polynomial-time classical/quantum computations.
It is easy to see that both hiding and binding cannot be statistical
at the same time in the classical setting,~\footnote{
If a commitment scheme is statistically binding, there exists at most one message to which a commitment can be opened except for a negligible probability. This unique message can be found by a brute-force search, which means that the scheme is not statistically hiding. 
} and therefore one of them
has to be based on a computational assumption. In other words, in a computationally hiding commitment scheme,
a malicious receiver can learn the message $m$ before the opening
if its computational power is unbounded, and
in a computationally binding commitment scheme, a malicious sender can change its committed message later if its
computational power is unbounded.
For the computational assumption, the existence of one-way functions
is known to be equivalent to the existence of 
classical commitments~\cite{Nao91,HILL99}.
The existence of one-way functions 
is considered the weakest assumption in classical cryptography, because virtually all complexity-based classical cryptographic primitives are known to imply the existence of one-way functions~\cite{STOC:LubRac86,FOCS:ImpLub89,STOC:ImpLevLub89}. 

The history of quantum information has demonstrated that 
utilizing quantum physics in information processing achieves many advantages.
In particular, it has been shown in quantum cryptography that
quantum physics can weaken cryptographic assumptions.
For example, if quantum states are transmitted, statistically-secure key distribution is possible~\cite{BB84}, 
although it is impossible classically.
Furthermore, oblivious transfer
is possible with only (quantum-secure) one-way functions when quantum states 
are transmitted\break~\cite{C:BCKM21b,EC:GLSV21,FOCS:CreKil88,C:BBCS91,MS94,STOC:Yao95,C:DFLSS09}.
Classically, it is known to be impossible to construct oblivious transfer from only one-way 
functions~\cite{C:ImpRud88}.\footnote{
\cite{C:ImpRud88} showed the impossibility of \emph{relativizing constructions} of key exchange from one-way functions, and oblivious transfer is stronger than key exchange. 
Since most cryptographic constructions are relativizing, this gives a strong negative result on constructing oblivious transfer from one-way functions in the classical setting.
}

As we have mentioned, it is classically impossible to realize commitments with
statistical hiding and statistical binding.
Does quantum physics overcome the barrier?
Unfortunately, it is already known that
both binding and hiding cannot be statistical at the same time even
in the quantum world~\cite{LC97,May97}. 
In fact, all known constructions of quantum commitments use at least
(quantum-secure) one-way functions~\cite{EC:DumMaySal00,EC:CreLegSal01,KO09,KO11,YWLQ15,Yan20,TCC:BitBra21}.

In this paper, we ask the following fundamental question:
\begin{center}
{\it 
Are one-way functions really necessary for commitments?}
\end{center}
It could be the case that in the quantum world commitments can be constructed 
from an assumption weaker than the existence of one-way functions.
This possibility is mentioned in previous works~\cite{C:BCKM21b,EC:GLSV21}, but no construction is provided.

Digital signatures~\cite{DifHel76} are other important primitives in cryptography. 
In a signature scheme, a secret key $sk$ and a public key $pk$ are generated. 
The secret key $sk$ is used to generate a signature $\sigma$ for a message $m$, and the public key $pk$ is used for the verification of the pair $(m,\sigma)$ of the message and the signature.
Any adversary who has $pk$ and can query the signing oracle many times cannot 
forge a signature $\sigma'$ for a message $m'$ which is not queried. In other words, $(m',\sigma')$ is not accepted by the verification algorithm except for an negligible probability.

Obviously, statistically-secure digital signatures are impossible, because an unbounded adversary who can access $pk$ and the verification algorithm can find a valid signature by a brute-force search.
In the classical world, it is known that the existence of digital signatures is equivalent to the existence of one-way functions.
In the quantum setting, on the other hand, digital signatures are not known to imply one-way functions.
Gottesman and Chuang introduced digital signatures with quantum public keys~\cite{Gottesman_Chuang},
but they considered information-theoretical security, and therefore the number of public keys should be bounded.
Our second fundamental question in this paper is the following:
\begin{center}
{\it Are digital signatures possible without one-way functions?}
\end{center}

\subsection{Our Results}
In this paper, we answer the above two fundamental questions affirmatively.
The first result of this paper is a construction of quantum commitments from pseudorandom quantum states generators (PRSGs)~\cite{C:JiLiuSon18,TCC:BraShm19,C:BraShm20}.
A PRSG is a quantum polynomial-time algorithm that, on input $k\in\bit^n$, outputs
an $m$-qubit state $|\phi_k\rangle$ such that
$|\phi_k\rangle^{\otimes t}$ over uniform random $k$ is computationally indistinguishable from
the same number of copies of Haar random states for any polynomial $t$.
(The formal definition of PRSGs is given in Definition~\ref{definition:PRS}.)

Our first result is stated as follows:\footnote{Our construction of commitments also satisfies perfect correctness, i.e.,
the probability that the honest receiver opens the correct bit committed by the honest sender is 1.}
\begin{theorem}
\label{mainresult1}
If a pseudorandom quantum states generator with $m\ge cn$ for a constant $c>1$ exists, then non-interactive quantum commitments (for classical messages) with computational hiding
and statistical binding exist.
\end{theorem}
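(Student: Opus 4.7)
The plan is to cast the scheme in the \emph{canonical} non-interactive form used in recent works on quantum commitments~\cite{Yan20,TCC:BitBra21}: a commit-to-$b$ algorithm is specified by a circuit $Q_b$ that on input $|0\rangle$ prepares a bipartite pure state $|\psi_b\rangle_{AB}$; the sender sends register $B$ as the commitment and later reveals $b$ together with register $A$; the receiver applies $Q_b^\dagger$ to $(A,B)$ and accepts iff the outcome is $|0\rangle$. In this form, computational hiding reduces to proving $\rho_0^B \cind \rho_1^B$, where $\rho_b^B := \mathrm{Tr}_A|\psi_b\rangle\langle\psi_b|$, and statistical (sum-)binding reduces, via Uhlmann's theorem and purification of the malicious sender, to the negligible fidelity bound $F(\rho_0^B,\rho_1^B) = \negl(n)$.

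Given a PRSG $\{|\phi_k\rangle\}_{k\in\{0,1\}^n}$ producing $m$-qubit states with $m\ge cn$, the concrete construction I propose is
\[
|\psi_0\rangle_{AB} = \frac{1}{\sqrt{2^n}}\sum_{k\in\{0,1\}^n}|k\rangle_A\,|\phi_k\rangle_B,\qquad |\psi_1\rangle_{AB} = \frac{1}{\sqrt{2^m}}\sum_{y\in\{0,1\}^m}|y\rangle_A\,|y\rangle_B,
\]
where register $A$ in the $b=0$ case is padded with $|0\rangle^{\otimes(m-n)}$ so that both cases use registers of the same size. Perfect correctness is immediate since an honest execution computes $Q_b^\dagger Q_b|0\rangle=|0\rangle$; both $Q_0$ and $Q_1$ are efficiently implementable, $Q_0$ by Hadamards followed by the PRSG unitary and $Q_1$ by the standard preparation of a maximally entangled state.

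Computational hiding is then a direct reduction: $\rho_0^B = \frac{1}{2^n}\sum_k|\phi_k\rangle\langle\phi_k|$ is exactly the one-copy ensemble of the PRSG, and $\rho_1^B = I/2^m$ is exactly the one-copy Haar average (by unitary invariance of the Haar measure), so $\rho_0^B \cind \rho_1^B$ follows from PRSG security with $t=1$. For statistical binding, I would estimate the fidelity directly,
\[
F(\rho_0^B,\, I/2^m) \;=\; \frac{1}{\sqrt{2^m}}\,\mathrm{Tr}\!\left(\sqrt{\rho_0^B}\right) \;\le\; \frac{\sqrt{\mathrm{rank}(\rho_0^B)}}{\sqrt{2^m}} \;\le\; \frac{2^{n/2}}{2^{m/2}} \;\le\; 2^{-(c-1)n/2},
\]
where the middle step is Cauchy--Schwarz on the support of $\rho_0^B$ and the rank bound uses that $\rho_0^B$ is a mixture of only $2^n$ pure states. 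The stretch condition $c>1$ makes the right-hand side negligible in $n$.

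The main technical step I expect to take care over is connecting this fidelity bound on reduced states to sum-binding against an \emph{arbitrary}, possibly dishonest, sender. Concretely, after purifying the adversary's post-commit state as $|\Psi\rangle_{AB}$ and her opening unitaries as $U_0,U_1$ on $A$, the two opening acceptance probabilities take the form $p_b = |\langle v_b|\Psi\rangle|^2$ for the purifications $|v_b\rangle := (U_b^\dagger\otimes I)|\psi_b\rangle$ of $\rho_b^B$; Uhlmann's theorem gives $|\langle v_0|v_1\rangle| \le F(\rho_0^B,\rho_1^B)$, and since the top eigenvalue of $|v_0\rangle\langle v_0|+|v_1\rangle\langle v_1|$ is $1+|\langle v_0|v_1\rangle|$, we conclude $p_0+p_1 \le 1+\negl(n)$. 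Once this linkage is cleanly written, all the other pieces slot in without surprise.
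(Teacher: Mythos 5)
Your proposal is correct and reaches the same fidelity bound, but it takes a route that differs from the paper's in two respects. First, the construction: the paper commits to $1$ by applying a random quantum one-time pad $P_{x,z}$ to $|\phi_k\rangle$ (mirroring Naor's $G(s)\oplus\eta$), whereas you commit to $1$ with a maximally entangled state $\frac{1}{\sqrt{2^m}}\sum_y|y\rangle|y\rangle$. The paper explicitly notes your variant in a footnote as an equivalent alternative; both yield the identical reduced states $\rho_0^B=\frac{1}{2^n}\sum_k|\phi_k\rangle\langle\phi_k|$ and $\rho_1^B=I/2^m$, so hiding (a direct single-copy PRSG reduction) and the bound $\sqrt{F(\rho_0^B,\rho_1^B)}\le 2^{-(c-1)n/2}$ via the rank/Cauchy--Schwarz argument are the same in substance. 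Second, the passage from small fidelity to sum-binding: you use Uhlmann plus the spectral fact that $\|\,|v_0\rangle\langle v_0|+|v_1\rangle\langle v_1|\,\|_\infty=1+|\langle v_0|v_1\rangle|$, while the paper uses monotonicity of fidelity under partial trace ($p_b\le F(\rho_b^B,\sigma_C)$ where $\sigma_C$ is the committed register's state, which is independent of $b$) followed by the Nayak--Shor inequality $F(\rho_0,\xi)+F(\sigma_1,\xi)\le 1+\sqrt{F(\rho_0,\rho_1)}$. Your route is the standard ``honest-binding implies sum-binding'' argument of Yan and is equally valid, but as written it is slightly imprecise: the paper's binding game lets the malicious sender keep an extra environment register $E$ and open with unitaries $U^{(b)}_{ER}$, so the acceptance operator is $U_b^\dagger(|\psi_b\rangle\langle\psi_b|_{RC}\otimes I_E)U_b$, which is \emph{not} rank one, and $p_b$ is not of the form $|\langle v_b|\Psi\rangle|^2$ for a single vector. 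This is repairable---either by the identity $\|P+Q\|_\infty=1+\|PQ\|_\infty$ for projectors together with Uhlmann applied to the purifications of $\rho_0^B,\rho_1^B$ on $RE$, or simply by switching to the paper's fidelity-monotonicity argument, which absorbs $E$ for free---so it is a presentational gap rather than a conceptual one. What your approach buys is a self-contained, linear-algebraic binding proof that avoids invoking the Nayak--Shor three-state inequality; what the paper's buys is a cleaner treatment of arbitrary (environment-assisted) senders and a tighter analogy with Naor's classical scheme.
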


In \cite{Kre21}, it is shown that PRSGs exist even if
$\BQP=\QMA$ relative to a quantum oracle.
If $\BQP=\QMA$, no quantum-secure classical cryptographic primitive exists,
because $\BQP=\QMA$ means $\NP\subseteq\BQP$. 
In particular, no quantum-secure one-way function exists.
Our Theorem~\ref{mainresult1} therefore shows that quantum
commitments can exist 
even if no quantum-secure classical cryptographic primitive exists.\footnote{It actually shows stronger things,
because $\BQP=\QMA$ also excludes the existence of some quantum-secure {\it quantum} cryptographic primitives
where honest algorithms are quantum.}
In particular, quantum commitments can exist even if no
quantum-secure one-way function exists.

As we will see later (Section~\ref{sec:commitments}), what we actually need is a weaker version of PRSGs
where only the computational indistinguishability of a single copy of $|\phi_k\rangle$ from the Haar random state is required.
We call such a weaker version of PRSGs {\it single-copy-secure PRSGs}.
(See Definition~\ref{definition:1PRS}. It is the $t=1$ version of Definition~\ref{definition:PRS}.)
Because a single copy of the Haar random state is equivalent to the maximally-mixed state,
the single-copy security means the computational indistinguishability from the
maximally-mixed state. 
It could be the case that the realization of single-copy-secure PRSGs is easier than
that of (multi-copy-secure) PRSGs.
(For more discussions, see Section~\ref{sec:PRSs}.)

\if0
{\color{blue}
\cite{Kre21} shows that no PRSG exists if $\BQP={\bf PP}$. The proof uses a tomography that needs
many copies of the states, and therefore the same proof technique does not seem to work for
single-copy-secure PRSs\footnote{PRSs and Haar random states are statistically distinguishable, and therefore an adversary who can run an exponential-time quantum algorithm can break
the security of PRSs. However, it is not obvious whether a QPT adversary who can classically query a (very strong) oracle can distinguish PRSs and Haar random states.
\cite{Kre21} cleverly exploits the property of shadow tomography that quantum parts can be done in quantum polynomial-time so that 
the heavy classical post-processing is delegated to the {\bf PP} oracle. For the single-copy case, however, we do not know how to
divide the exponential-time distinguishing quantum algorithm into a light quantum phase and a heavy classical phase.}. In other words, single-copy-secure PRSs could exist even if $\BQP={\bf PP}$
(or even if all complexity assumptions (for classical inputs) fail!).
This means that quantum commitments could exist even if
all complexity assumptions (for classical inputs) fail. It is an important open problem to specify
complexity assumptions necessary for the existence of single-copy-secure PRSs and quantum commitments.
For that goal, we will need to construct a completely new complexity theory that considers quantum 
inputs\footnote{As pointed out in \cite{Kre21}, traditional complexity theory (not only the classical one but also the quantum one)
is not useful for quantum cryptography
when quantum states are transmitted, because traditional complexity theory is that for problems with classical inputs.}.
}
\takashi{I'm not sure if we should include this paragraph because we are not sure if single-copy PRSs can really exist if all classical complexity assumptions are false.}
\fi

Non-interactive commitments are a special type of commitments. 
(See Definition~\ref{definition:commitments}.)
In general, the sender and the receiver exchange many
rounds of messages during the commitment phase, but in non-interactive commitments, only a single message from
the sender to the receiver is enough for the commitment.
It is known that non-interactive quantum commitments (for classical messages)
are possible with (quantum-secure) one-way functions~\cite{YWLQ15},
while it is subject to a black-box barrier in the classical case~\cite{MP12}.

As the definition of binding, we choose a standard one, sum-binding~\cite{AC:Unruh16}, which roughly means that $p_0+p_1\le1+\negl(\lambda)$,
where $\negl$ is a negligible function, $\lambda$ is a security parameter,
and $p_0$ and $p_1$ are probabilities that the malicious sender makes the receiver open
0 and 1, respectively.
(The formal definition of statistical sum-binding is given in Definition~\ref{definition:sumbinding}.)

Our first result, Theorem~\ref{mainresult1}, that quantum commitments can be possible without one-way functions has  important
consequences in cryptography. 
It is known that quantum commitments imply the existence of quantum-secure zero-knowledge proofs (of knowledge) for all $\NP$ languages~\cite{FUYZ20} and 
quantum-secure oblivious transfer (and therefore multi-party computations (MPC))~\cite{C:BCKM21b,EC:GLSV21}.
Thus, those primitives can also exist even if $\BQP=\QMA$ (and in particular quantum-secure one-way functions do not exist)\footnote{
Indeed, \cite{C:BCKM21b} states as follows: 
``\emph{Moreover if in the future, new constructions of statistically binding, quantum computationally hiding commitments involving quantum communication are discovered based on assumptions weaker
than quantum-hard one-way functions, it would be possible to plug those into our protocol compilers to
obtain QOT.}"
} 
while classical constructions of them imply the existence of one-way functions.
For more details, see Appendix~\ref{sec:AQY_binding}.

\if0
Our first result, Theorem~\ref{mainresult1}, that quantum commitments can be possible without one-way functions has  important
consequences in cryptography. 
It is known that quantum commitments imply the existence of quantum-secure zero-knowledge proofs (of knowledge) for all $\NP$ languages and quantum-secure oblivious transfer~\cite{FUYZ20}.
Thus, those primitives can also exist even if $\BQP=\QMA$ (and in particular quantum-secure one-way functions do not exist) while classical constructions of them imply the existence of one-way functions.
We remark that \cite{FUYZ20} only proves a game-based security for their oblivious transfer, which is weaker than the standard simulation-based security. 
Therefore, their oblivious transfer does not suffice for constructing general multi-party computation. 
Since we just plug our commitments into their construction, our result on oblivious transfer also has a similar limitation. 
However, we believe that we can construct oblivious transfer with the simulation-based security from quantum commitments by applying the technique of \cite{FUYZ20} to the construction of \cite{C:BCKM21b}\footnote{
Indeed, \cite{C:BCKM21b} states as follows: 
``\emph{Moreover if in the future, new constructions of statistically binding, quantum computationally hiding commitments involving quantum communication are discovered based on assumptions weaker
than quantum-hard one-way functions, it would be possible to plug those into our protocol compilers to
obtain QOT.}"
Unfortunately, our commitment scheme cannot be directly plugged into their construction since they require \emph{classical binding} property~\cite{TCC:BitBra21}, which is stronger than the standard notion of binding for quantum commitments proven in this paper. 
}. 
If this is true, multi-party computation can also exist even if $\BQP=\QMA$.
We leave a formal proof of it as an interesting future work.
\fi

We also remark that there is no known construction of PRSGs from weaker assumptions than the existence of one-way functions without oracles. 
Thus, our result should be understood as a theoretical evidence that quantum commitments can exist even if $\BQP=\QMA$ rather than a new concrete construction.
It is an interesting open problem to construct a PRSG from weaker assumptions than the existence of one-way functions without oracles. 
Such a construction immediately yields commitments (and more) by our result. 

One might ask the following question: can we remove (or improve) the condition of $m\ge cn$ with a constant $c>1$ in Theorem~\ref{mainresult1}?
The answer is no for single-copy-secure PRSGs, because if $m\le n$, there is a trivial construction of a single-copy-secure PRSG without any assumption:
$|\phi_k\rangle\coloneqq|k_1,...,k_m\rangle$ for any $k\in\bit^n$, where $k_j$ is the $j$th bit of $k$.
In fact, $\frac{1}{2^n}\sum_{k\in\bit^n}|\phi_k\rangle\langle\phi_k|=\frac{I^{\otimes m}}{2^m}$.
If quantum commitments were constructed from such a single-copy-secure PRSG, we could realize quantum commitments without any assumption, which is known to be impossible~\cite{LC97,May97}.
We note that 
\cite{Kre21} considers only the case when $m=n$, but it is clear that the result holds for $m\ge cn$ with constant $c>1$.

Finally, we do not know whether the opposite of Theorem~\ref{mainresult1} holds or not. Namely,
do quantum commitments imply PRSGs (or single-copy-secure PRSGs)? 
It is an interesting open problem.

\if0
In fact, it is clear from our construction (Section~\ref{sec:commitments}) 
that the existence of the following quantum polynomial-time (QPT) algorithm $G$ is equivalent to the existence of commitments:
\begin{itemize}
\item
On input the security parameter $1^\lambda$ and $b\in\bit$, $G(1^\lambda,b)$ outputs a quantum state $|\Psi_b\rangle_{AB}$.
\item
The fidelity $F(\rho_0,\rho_1)$ is negligibly small.
Here, $\rho_b\coloneqq \mbox{Tr}_A(|\Psi_b\rangle\langle\Psi_b|_{AB})$ is the reduced density operator of $|\Psi_b\rangle_{AB}$
for the subsystem $B$. 
\item
$\rho_0$ and $\rho_1$ are computationally indistinguishable.
\end{itemize}
Single-copy-secure PRSs generators are special cases of such $G$ when $\rho_1$ is maximally-mixed state. We do not know whether
the existence of such $G$ means the existence of a single-copy-secure PRSs generator or not.
It is an interesting open problem.
\fi

Now let us move on to our second subject, namely, digital signatures.
Our second result in this paper is the following:
\begin{theorem}
\label{theorem:signature}
If a pseudorandom quantum states generator with $m\ge cn$ for a constant $c>1$ exists, then
one-time secure digital signatures with quantum public keys exist.
\end{theorem}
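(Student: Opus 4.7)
The plan is to construct a Lamport-style one-time signature scheme in which the one-way function is replaced by the pseudorandom quantum states generator $G$ (with $|\phi_k\rangle\leftarrow G(k)$ on $m$ qubits, $m\geq cn$). For the single-bit case, key generation samples $(k_0,k_1)\leftarrow\bit^n\times\bit^n$ and sets $\sk=(k_0,k_1)$; the quantum public key consists of $t=\poly(\lambda)$ copies of $|\phi_{k_0}\rangle\otimes|\phi_{k_1}\rangle$, which the key generator distributes to the verifier(s). Signing a bit $b$ outputs $\sigma=k_b$. Verification on $(b,\sigma)$ runs $G(\sigma)$ to recompute a candidate state and performs SWAP tests against the copies of the $b$-th register of the public key, accepting iff every SWAP test passes. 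Multi-bit messages are handled by repeating this scheme in parallel, one independent keypair per message bit.

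Correctness is immediate: an honestly produced $\sigma=k_b$ yields $|\phi_\sigma\rangle=|\phi_{k_b}\rangle$, so every SWAP test accepts with probability one.

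For one-time EUF-security, I would reduce to the security of the PRSG. Given an adversary $\cA$ that, after seeing the $t$-copy quantum public key and one signature, produces a verifying forgery on the unqueried bit with non-negligible probability $\eps$, I construct a distinguisher $\cD$ which receives $t'=t+\poly(\lambda)$ copies of a challenge state $\rho$ (either $|\phi_k\rangle$ for random $k$ or a Haar random $|H\rangle$). $\cD$ guesses $b^\star\in\bit$ (the bit $\cA$ will query), samples $k_{b^\star}\leftarrow\bit^n$ itself, generates $t$ copies of $|\phi_{k_{b^\star}}\rangle$, and embeds the challenge copies into the $(1-b^\star)$-th slot of the public key. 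On the signing query, $\cD$ returns $k_{b^\star}$ if $\cA$ queried $b^\star$ and aborts otherwise; on receiving a forgery $\sigma^\star$ for $1-b^\star$, it runs the verification against the remaining copies of $\rho$ and outputs the result. In the PRSG branch this perfectly simulates the real game (up to the $1/2$ loss from guessing), so $\cD$ accepts with probability at least $\eps/2$.

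The key technical step, and the main obstacle, is bounding the acceptance probability in the Haar branch. Because $\sigma^\star\in\bit^n$ depends adaptively on the (identical) challenge copies, I cannot treat $|\phi_{\sigma^\star}\rangle$ as fixed; instead I take a union bound over all $2^n$ possible values of $\sigma^\star$. For each fixed $k\in\bit^n$ and Haar random $|H\rangle$ on $2^m$ dimensions, $|\langle\phi_k|H\rangle|^2\sim\mathrm{Beta}(1,2^m-1)$, so $\Pr_H[|\langle\phi_k|H\rangle|^2\geq 1/2]\leq 2^{-\Omega(2^m)}$; the union bound gives $\Pr_H[\max_k|\langle\phi_k|H\rangle|^2\geq 1/2]\leq 2^n\cdot 2^{-\Omega(2^m)}$, which is negligible precisely because $m\geq cn$ with $c>1$. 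Conditioned on the good event, each SWAP test accepts with probability at most $3/4$, so taking $\poly(\lambda)$ SWAP tests drives the Haar-branch acceptance probability below any inverse polynomial. Combining the two branches contradicts PRSG security and yields the theorem; the $m\geq cn$ hypothesis is used exactly where it was used for commitments, to ensure that the $2^n$-sized set of PRSG outputs is Haar-negligible inside a $2^m$-dimensional space.
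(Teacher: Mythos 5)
Your proposal is correct, but it reaches the theorem by a noticeably different route than the paper, and the comparison is instructive. The paper factors the argument through an intermediate primitive, a \emph{one-way quantum states generator} (OWSG): it first shows that a PRSG with $m\ge cn$ is an OWSG (the adversary's forged key $\sigma$ would let a distinguisher test the spare copy of the challenge state, and in the Haar branch the acceptance probability is bounded by the one-line expectation computation $\int d\mu(\psi)\sum_{\sigma}|\langle\phi_\sigma|\psi\rangle|^2=\sum_\sigma\langle\phi_\sigma|\tfrac{I^{\otimes m}}{2^m}|\phi_\sigma\rangle\le 2^{n-m}$), and then builds the Lamport-style scheme generically from any OWSG, with verification being the single projective measurement $\{|\phi_\sigma\rangle\langle\phi_\sigma|,I-|\phi_\sigma\rangle\langle\phi_\sigma|\}$ on one copy of $pk_m$ (efficiently implementable by applying $U_\sigma^\dagger$ and measuring in the computational basis). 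You instead collapse the two steps into one direct reduction to PRSG security, replace the projective verification by repeated SWAP tests against a many-copy public key, and bound the Haar branch by a Beta tail bound plus a union bound over all $2^n$ keys. Both are sound; your reduction's handling of the guessed bit $b^\star$ and the resulting factor-$\tfrac12$ loss matches the paper's exactly. What the paper's route buys is modularity (Theorem~\ref{theorem:signature2} holds for any OWSG, a notion it argues is of independent interest) and a leaner verifier; what yours buys is avoiding the inverse unitary $U_\sigma^\dagger$ in verification, at the cost of a polynomially larger public key. One small inaccuracy: you claim the hypothesis $m\ge cn$ enters ``exactly where it was used for commitments,'' but your union bound $2^n\cdot 2^{-\Omega(2^m)}$ is doubly-exponentially small and would go through for far smaller $m$; it is the paper's expectation bound $2^n/2^m$ (and the analogous fidelity bound for commitments) that genuinely needs $m\ge cn$, so your argument in fact establishes the Haar-branch step under a weaker parameter requirement than stated.
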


One-time security means that the adversary can query the signing oracle at most once. 
(See Definition~\ref{definition:signatures} and Definition~\ref{definition:one-time_security}.)
In the classical setting, it is known how to
construct many-time secure digital signatures from one-time secure digital signatures~\cite{C:Merkle89a}, 
but we do not know how to generalize our one-time secure quantum signature scheme
to a many-time secure one, because in our case public keys are quantum. It is an important open problem to construct many-time 
secure digital signatures from PRSGs.

Due to the oracle separation by \cite{Kre21}, Theorem~\ref{theorem:signature} means that (at least one-time secure) digital
signatures can exist even if no
quantum-secure classical cryptographic primitive exists.\footnote{Again, it also excludes some
quantum-secure {\it quantum} cryptographic primitives.}
In particular, (one-time secure) digital signatures can exist even if no
quantum-secure one-way function exists.

Our construction is similar to the ``quantum public key version" of the 
classical Lamport signature~\cite{DifHel76} by Gottesman and Chuang~\cite{Gottesman_Chuang}.
They consider information-theoretical security, and therefore the number of public keys should be bounded.
On the other hand, our construction from PRSGs allows unbounded polynomial number of public keys.
Quantum cryptography with quantum public keys are also studied in \cite{JC:KKNY12,Doliskani21}.

We do not know whether the condition, $m\ge cn$ with a constant $c>1$, can be improved or not in Theorem~\ref{theorem:signature}.
Although it is possible to construct PRSGs without this 
restriction~\cite{C:BraShm20}, this is satisfied in
the construction of \cite{Kre21}, and therefore enough for our purpose of
showing the existence of digital signatures without one-way functions.

As we will see later (Section~\ref{sec:signatures}), our construction of digital signatures is actually based on what we call {\it one-way quantum states generators (OWSGs)} (Definition~\ref{definition:onewayness}). 
Intuitively, we say that a quantum polynomial-time algorithm that outputs an $m$-qubit quantum state $|\phi_k\rangle$ on input $k\in\bit^n$
is a OWSG if it is hard to find, given polynomially many copies of $|\phi_k\rangle$ (with uniformly random $k$), an $n$-bit string $\sigma\in\bit^n$ such that $|\phi_\sigma\rangle$ is close to $|\phi_k\rangle$.
In other words, what we actually show is the following:
\begin{theorem}
\label{theorem:signature2}
If a one-way quantum states generator exists, then
one-time secure digital signatures with quantum public keys exist.
\end{theorem}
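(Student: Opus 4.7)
The plan is to build a quantum analogue of the classical Lamport one-time signature scheme in which the one-way function is replaced by the OWSG $G$. Let $\ell=\ell(\secp)$ be the message length and fix a polynomial $t=t(\secp)$ to be chosen later. Key generation samples $2\ell$ independent uniform strings $k_{i,b}\in\bit^n$ for $i\in[\ell]$ and $b\in\bit$ and keeps them as the classical secret key. The (quantum) public key is $\pk=\bigotimes_{i,b}\ket{\phi_{k_{i,b}}}^{\otimes t}$, obtained by running $G$ on each $k_{i,b}$. To sign $m=m_1\cdots m_\ell$, the signer releases the classical tuple $\sig=(k_{1,m_1},\ldots,k_{\ell,m_\ell})$. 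To verify $(m,\sig)$, the verifier runs $G$ to regenerate $\ket{\phi_{\sig_i}}$ and, for each $i$, performs several SWAP tests between $\ket{\phi_{\sig_i}}$ and fresh copies of $\ket{\phi_{k_{i,m_i}}}$ taken from $\pk$, accepting iff every test accepts.

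Correctness is immediate: for an honest signature, $\sig_i=k_{i,m_i}$, so each SWAP test compares two identical copies of the same pure state and passes with probability $1$. For one-time unforgeability, suppose a QPT adversary $\cA$ wins with non-negligible advantage $\epsilon$, producing a forgery $(m',\sig')$ with $m'\ne m$ after one signing query on $m$. Any coordinate $i^*$ at which $m'_{i^*}\ne m_{i^*}$ has the property that the signing query never revealed $k_{i^*,m'_{i^*}}$; $\cA$'s only information about this particular key is the $t$ copies of $\ket{\phi_{k_{i^*,m'_{i^*}}}}$ embedded in $\pk$. Since verification factorises over coordinates, its acceptance probability is upper-bounded by the acceptance probability of the SWAP tests at $i^*$, and a standard SWAP-test calculation combined with a suitable number of repetitions turns this non-negligible acceptance probability into a non-negligible lower bound on $|\langle\phi_{\sig'_{i^*}}|\phi_{k_{i^*,m'_{i^*}}}\rangle|^2$.

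This yields the reduction to OWSG security. The OWSG adversary $\cB$ receives a polynomial number of copies of $\ket{\phi_k}$ for uniform $k$. It picks $(i^*,b^*)\in[\ell]\times\bit$ uniformly, embeds the supplied copies in slot $(i^*,b^*)$ of the public key, and samples every other $k_{i,b}$ on its own so that it can generate the remaining slots of $\pk$ honestly using $G$. When $\cA$ issues its single signing query on $m$, $\cB$ aborts if $m_{i^*}=b^*$ and otherwise returns $(k_{1,m_1},\ldots,k_{\ell,m_\ell})$, all of which it knows; when $\cA$ finally outputs $(m',\sig')$, $\cB$ outputs $\sig'_{i^*}$. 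Because $(i^*,b^*)$ is hidden from $\cA$ and the $k_{i,b}$ are i.i.d.\ uniform, the joint distribution of $\pk$ and the signature that $\cA$ sees in the non-aborting branch matches the real one-time game exactly. A standard averaging argument then gives $\Pr[\text{not abort}\wedge\cA\text{ wins}\wedge m'_{i^*}\ne m_{i^*}]\ge\epsilon/(2\ell)$, and on this event $\sig'_{i^*}$ is a classical candidate whose state $\ket{\phi_{\sig'_{i^*}}}$ has non-negligible overlap with the challenge state $\ket{\phi_k}$, contradicting the OWSG security of $G$.

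The main technical obstacle is calibrating the SWAP-test-based verification against the OWSG closeness threshold in Definition~\ref{definition:onewayness}. A single SWAP test accepts with probability at least $1/2$ on every pair of inputs, so a verifier using one test per coordinate cannot force a cheater to have non-negligible fidelity; enough repetitions per coordinate (or, equivalently, a projection onto the symmetric subspace over $t+1$ copies) are required to turn a non-negligible acceptance probability into a fidelity bound that meets the OWSG definition. Correspondingly, the polynomial $t$ and the number of challenge copies requested by $\cB$ from the OWSG challenger must be large enough to supply the public key and to support these repetitions, while remaining polynomial in $\secp$. Once these parameters are chosen consistently with the formal OWSG definition, the reduction above turns any successful forger into a successful OWSG adversary, completing the proof.
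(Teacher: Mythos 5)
Your proposal is correct and follows essentially the same route as the paper: a Lamport-style one-time scheme whose public keys are OWSG outputs, together with a guess-and-embed reduction that plants the OWSG challenge copies in one key slot, aborts when the signing query would force it to reveal the unknown key, and forwards the relevant component of the forged signature as the OWSG candidate. (The paper restricts to single-bit messages for simplicity; your multi-bit version with the $1/(2\ell)$ loss is the obvious generalization and is fine.) The one substantive difference is the verification algorithm. You verify by repeated SWAP tests against public-key copies, which creates the calibration problem you yourself flag: a single SWAP test accepts with probability at least $1/2$ on any input, so you need polynomially many repetitions (hence $t$ copies per slot in the public key) and an argument converting non-negligible acceptance of the repeated test into a non-negligible lower bound on $\mathbb{E}\bigl[|\langle\phi_{\sigma'}|\phi_k\rangle|^2\bigr]$. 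That argument does go through, but it is an unnecessary detour: since the verifier can run $G$ on $\sigma$ to prepare $|\phi_\sigma\rangle=U_\sigma|0\ldots 0\rangle$, it can equally well apply $U_\sigma^\dagger$ to the public-key copy and measure in the computational basis, i.e., perform the projective measurement $\{|\phi_\sigma\rangle\langle\phi_\sigma|,\ I-|\phi_\sigma\rangle\langle\phi_\sigma|\}$. This is exactly the paper's verification; it needs only one copy per slot, preserves perfect correctness, and coincides with the winning condition of the OWSG game in Definition~\ref{definition:onewayness}, so the reduction is exact (the paper gets $\Pr[\cC'\to 1]=\tfrac{1}{2}\Pr[\mathsf{Exp}=1]$ for single-bit messages) and your ``main technical obstacle'' disappears entirely with this choice of verifier.
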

We show that a PRSG is a OWSG (Lemma~\ref{lemma:hiding2oneway}),
and therefore, Theorem~\ref{theorem:signature} is obtained as a corollary of Theorem~\ref{theorem:signature2}.
The concept of OWSGs itself seems to be of independent interest.
In particular, we do not know whether OWSGs imply PRSGs or not, which is an interesting open problem. 

\if0
{\color{red}
Public keys of our signature scheme are quantum. Is it possible to make them classical?
The answer is no: assume that the key generation algorithm is a PPT algorithm that outputs 
a public classical secret It is easy to see that it is impossible. Assume that One might ask the following question.
}
\fi

Remember that for the construction of our commitment scheme
we use only single-copy-secure PRSGs.
Unlike our commitment scheme, on the other hand,
our signature scheme uses the security of PRSGs for an unbounded polynomial number of copies, 
because the number of copies decides the number of quantum public keys.
In other words, single-copy-secure PRSGs enable commitments but (multi-copy-secure) PRSGs enable signatures.
There could be therefore a kind of hierarchy in PRSGs for different numbers of copies, which seems to be an interesting future research subject.

\subsection{Technical Overviews}
Here we provide intuitive explanations of our constructions given in 
Section~\ref{sec:commitments} and Section~\ref{sec:signatures}.

\subsubsection{Commitments}
The basic idea of our construction of commitments
is, in some sense, a quantum
generalization of the classical Naor's commitment scheme~\cite{Nao91}. 

Let us recall Naor's construction. The receiver first samples uniformly random 
$\eta\leftarrow\{0,1\}^{3n}$, 
and sends it to the sender. The sender 
chooses a uniformly random seed $s\leftarrow\{0,1\}^n$, and
sends $G(s)\oplus \eta^b$ to the receiver,
where $G:\{0,1\}^n\to\{0,1\}^{3n}$ is a length-tripling pseudorandom generator,
and $b\in\{0,1\}$ is the bit to commit.
Hiding is clear: because the receiver does not know $s$, the receiver cannot distinguish
$G(s)$ and $G(s)\oplus \eta$.
The decommitment is $(b,s)$. The receiver can check whether the commitment is
$G(s)$ or $G(s)\oplus \eta$ from $s$.
Binding comes from the fact that if both 0 and 1 can be opened, 
there exist $s_0,s_1$ such that $G(s_0)=G(s_1)=\eta$. 
There are $2^{2n}$ such seeds, and therefore for a random $\eta$,
it is impossible except for $2^{-n}$ probability.

Our idea is to replace $G(s)$ with a pseudorandom state $|\phi_k\rangle$, and
to replace the addition of $\eta^b$ with the quantum one-time pad, which randomly applies Pauli $X$ and $Z$.
More precisely, the sender who wants to commit $b\in\bit$ generates the state
\begin{eqnarray*}
|\psi_b\rangle
\coloneqq\frac{1}{\sqrt{2^{2m+n}}}
\sum_{x,z\in\{0,1\}^m}
\sum_{k\in\{0,1\}^n}
|x,z,k\rangle_R\otimes P_{x,z}^b|\phi_k\rangle_C,
\end{eqnarray*}
and sends the register $C$ to the receiver,
where $P_{x,z}\coloneqq\bigotimes_{j=1}^mX_j^{x_j}Z_j^{z_j}$.
It is the commitment phase. At the end of the commitment phase, the receiver's state is 
$\rho_0\coloneqq\frac{1}{2^n}\sum_k|\phi_k\rangle\langle\phi_k|$ when $b=0$ and 
$\rho_1\coloneqq\frac{1}{2^n}\frac{1}{4^m}\sum_k\sum_{x,z}P_{x,z}^b|\phi_k\rangle\langle\phi_k|P_{x,z}^b$ when $b=1$.
By the security of single-copy-secure PRSGs, 
$\rho_0$
is computationally indistinguishable from the $m$-qubit maximally-mixed state $\frac{I^{\otimes m}}{2^m}$,
while $\rho_1=\frac{I^{\otimes m}}{2^m}$ due to the quantum one-time pad (Lemma~\ref{lemma:QOTP}).
The two states, $\rho_0$ and $\rho_1$, are therefore computationally indistinguishable,
which shows computational hiding.

For statistical sum-binding, we show that the fidelity between 
$\rho_0$ and $\rho_1$ is negligibly small.
It is intuitively understood as follows:
$\rho_0=\frac{1}{2^n}\sum_{k\in\bit^n}|\phi_k\rangle\langle\phi_k|$ has a support in at most $2^n$-dimensional space, 
while $\rho_1=\frac{I^{\otimes m}}{2^m}$ has a support in the entire $2^m$-dimensional space, where $m\ge cn$ with $c>1$,
and therefore the ``overlap" between $\rho_0$ and $\rho_1$ is small.

A detailed explanation of our construction of commitments and its security proof are given in Section~\ref{sec:commitments}.

\subsubsection{Digital Signatures}
Our construction of digital signatures is a quantum public key version of the classical Lamport signature.
The Lamport signature scheme is constructed from a one-way function.
For simplicity, let us explain the Lamport signature scheme for a single-bit message.
Let $f$ be a one-way function.
The secret key is $sk\coloneqq(sk_0,sk_1)$, where $sk_0,sk_1$ are uniform randomly chosen $n$-bit strings.
The public key is $pk\coloneqq(pk_0,pk_1)$, where $pk_0\coloneqq f(sk_0)$ and $pk_1\coloneqq f(sk_1)$.
The signature $\sigma$ for a message $m\in\bit$ is $sk_m$,
and the verification is to check whether $pk_m=f(\sigma)$.
Intuitively, the (one-time) security of this signature scheme comes from that of the one-way function $f$.

We consider the quantum public key version of it: $pk$ is a quantum state.
More precisely, our key generation algorithm chooses $k_0,k_1\leftarrow\bit^n$, and
runs $|\phi_{k_b}\rangle\leftarrow \StateGen(k_b)$ for $b\in\bit$,
where $\StateGen$ is a PRSG.\footnote{It is not necessarily a PRSG. Any OWSG
(Definition~\ref{definition:onewayness}) is enough. For details, see Section~\ref{sec:signatures}.}
It outputs $sk\coloneqq (sk_0,sk_1)$ and $pk\coloneqq (pk_0,pk_1)$, where
$sk_b\coloneqq k_b$ and $pk_b\coloneqq|\phi_{sk_b}\rangle$ for $b\in\bit$.
To sign a bit $m\in\bit$, the signing algorithm outputs the signature $\sigma\coloneqq sk_m$.
Given the message-signature pair $(m,\sigma)$,
the verification algorithm measures $pk_m$ with $\{|\phi_\sigma\rangle\langle\phi_\sigma|,I-|\phi_\sigma\rangle\langle\phi_\sigma|\}$
and accepts if and only if the result is $|\phi_\sigma\rangle\langle\phi_\sigma|$.

Intuitively, this signature scheme is one-time secure because
$sk_b$ cannot be obtained from $|\phi_{sk_b}\rangle^{\otimes t}$: If $sk_b$ is obtained,
$|\phi_{sk_b}\rangle^{\otimes t}$ can be distinguished from Haar random states, which contradicts the security of PRSGs.
In order to formalize this intuition, we introduce what we call OWSGs (Definition~\ref{definition:onewayness}), and show that
PRSGs imply OWSGs (Lemma~\ref{lemma:hiding2oneway}).
For details of our construction of digital signatures and its security proof, see Section~\ref{sec:signatures}.

\subsection{Concurrent Work}
Few days after the first version of this paper was made online,
a concurrent work~\cite{AQY21} appeared.
The concurrent work also constructs commitments from PRSGs. We give comparisons between our and their results.
\begin{enumerate}
\if0
    \item For achieving the security level of $O(2^{-n})$ for binding, they rely on $2\log n+\omega(\log\log n)$-qubit pseudorandom quantum states 
    that are secure against adversaries that get arbitrarily many copies of the states or $7n$-qubit pseudorandom quantum states 
    that are secure against adversaries that get a single copy of the state where $n$ is the key-length.
    On the other hand, we rely on $3n$-qubit pseudorandom quantum states 
    that are secure against adversaries that get a single copy of the state. 
    {\color{blue} Thus, the required parameters seem incomparable though we cannot simply compare them due to the difference of definitions of binding. (See also Appendix~\ref{sec:AQY_binding}.)
    }
\fi
    \item For achieving the security level of $O(2^{-n})$ for binding, they rely on PRSGs with $m=2\log n+\omega(\log\log n)$ and any $t$,
    or $m=7n$ and $t=1$.
    On the other hand, we rely on PRSGs with $m=3n$ and $t=1$.
    Thus, the required parameters seem incomparable though we cannot simply compare them due to the difference of definitions of binding. (See also Appendix~\ref{sec:AQY_binding}.)
    
\item Our scheme is non-interactive whereas theirs is interactive though we believe that their scheme can also be made non-interactive by a similar technique to ours. 
\item They consider a more general definition of PRSGs than us that allows the state generation algorithm to sometimes fail. We do not take this into account since we can rely on PRSGs of \cite{Kre21} whose state generation never fails for our primary goal to show that commitments and digital signatures can exist even if one-way functions do not exist. 
Moreover, the failure probability has to be anyway negligibly small due to the security of PRSGs, and
therefore it would be simpler to ignore the failure.
\end{enumerate}
Besides commitments,  
the result on digital signatures is unique to this paper.
On the other hand, \cite{AQY21} contains results that are not covered in this paper such as pseudorandom function-like states and symmetric key encryption.
We remark that our result on digital signatures was added a few days after the initial version of~\cite{AQY21} was made online, but the result was obtained independently, and there is no overlap with~\cite{AQY21} in this part.  

Though most part of this work was done
independently of~\cite{AQY21}, there are two part where we revised the paper based on~\cite{AQY21}. 
The first is the definition of PRSGs.  As pointed out in the initial version of~\cite{AQY21}, the initial version of this work implicitly assumed that PRSGs do not use any ancillary qubits, which is a very strong restriction. However, we found that all of our results can be based on PRSGs that use ancillary qubits with just notational adaptations.
Thus, we regard this as a notational level issue and fixed it. 

The second is the connection to oblivious transfer and MPC explained in Appendix~\ref{sec:AQY_binding}. In the initial version of this work, we only mentioned the idea of using the techniques of \cite{FUYZ20} to instantiate oblivious transfer and MPC of~\cite{C:BCKM21b} based on quantum commitments. On the other hand, \cite{AQY21} shows it assuming that the base quantum commitment satisfies a newly introduced definition of statistical binding property, which we call AQY-binding. 
Interestingly, we found that it is already implicitly shown in~\cite{FUYZ20} that the sum-binding implies AQY-binding. As a result, our commitment scheme can also be used to instantiate oblivious transfer and MPC of~\cite{C:BCKM21b}. See Appendix~\ref{sec:AQY_binding} for more detail.

\section{Preliminaries}
In this section, we provide preliminaries.

\subsection{Basic Notations}
We use standard notations in quantum information. For example, $I\coloneqq|0\rangle\langle0|+|1\rangle\langle1|$ is the two-dimensional identity operator.
For notational simplicity, we sometimes write the $n$-qubit identity operator $I^{\otimes n}$ just $I$ when it is clear from the context.
$X,Y,Z$ are Pauli operators.
$X_j$ means the Pauli $X$ operator that acts on the $j$th qubit.
Let $\rho_{AB}$ be a quantum state over the subsystems $A$ and $B$.
Then $\mbox{Tr}_A(\rho_{AB})$ is the partial trace of $\rho_{AB}$ over subsystem $A$.
For $n$-bit strings $x,z\in\bit^n$, $X^x\coloneqq \bigotimes_{j=1}^nX_j^{x_j}$
and $Z^z\coloneqq \bigotimes_{j=1}^nZ_j^{z_j}$,
where $x_j$ and $z_j$ are the $j$th bit of $x$ and $z$, respectively.

We also use standard notations in cryptography.
A function $f$ is negligible if for all constant $c>0$, $f(\lambda)<\lambda^{-c}$ for large enough $\lambda$.
QPT and PPT stand for quantum polynomial time and (classical) probabilistic polynomial time, respectively.
$k\leftarrow\bit^n$ means that $k$ is sampled from $\bit^n$ uniformly at random.
For an algorithm $\cA$, $\cA(\xi)\to \eta$ means that the algorithm outputs $\eta$ on input $\xi$.

In this paper, we use the following lemma. It can be shown by a straightforward calculation.
\begin{lemma}[Quantum one-time pad]
\label{lemma:QOTP}
For any $m$-qubit state $\rho$,
\begin{eqnarray*}
\frac{1}{4^m}\sum_{x\in\bit^m}\sum_{z\in\bit^m}X^xZ^z\rho Z^z X^x=\frac{I^{\otimes m}}{2^m}.
\end{eqnarray*}
\end{lemma}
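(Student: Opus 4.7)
The plan is to prove the identity by direct computation in the computational basis, exploiting the two standard facts that conjugation by $Z^z$ acts diagonally with a sign governed by an inner product and that conjugation by $X^x$ permutes basis states by $\oplus$. I expect no serious obstacle; the only care needed is bookkeeping of sums of signs and the use of $\mathrm{Tr}(\rho)=1$ at the end.

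Concretely, I would first expand $\rho=\sum_{a,b\in\{0,1\}^m}\rho_{a,b}|a\rangle\langle b|$ and treat the $Z$-twirl and the $X$-twirl in sequence. Since $Z^z|a\rangle=(-1)^{z\cdot a}|a\rangle$, each term transforms as $Z^z|a\rangle\langle b|Z^z=(-1)^{z\cdot(a\oplus b)}|a\rangle\langle b|$, and the Fourier identity $\sum_{z\in\{0,1\}^m}(-1)^{z\cdot(a\oplus b)}=2^m\delta_{a,b}$ kills the off-diagonal part. Hence
\[
\frac{1}{2^m}\sum_{z\in\{0,1\}^m} Z^z\rho Z^z=\sum_{a\in\{0,1\}^m}\rho_{a,a}\,|a\rangle\langle a|.
\]

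Next I would apply the $X$-twirl to the resulting diagonal operator. Since $X^x|a\rangle\langle a|X^x=|a\oplus x\rangle\langle a\oplus x|$, the change of variable $y=a\oplus x$ yields $\sum_{x\in\{0,1\}^m}X^x|a\rangle\langle a|X^x=\sum_{y\in\{0,1\}^m}|y\rangle\langle y|=I^{\otimes m}$, independently of $a$. Combining with the previous step and using $\sum_a\rho_{a,a}=\mathrm{Tr}(\rho)=1$ gives
\[
\frac{1}{4^m}\sum_{x,z\in\{0,1\}^m}X^xZ^z\rho Z^z X^x=\frac{1}{2^m}\Bigl(\sum_{a}\rho_{a,a}\Bigr)I^{\otimes m}=\frac{I^{\otimes m}}{2^m},
\]
which is the claimed identity. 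The only step requiring any thought is the character-sum $\sum_z(-1)^{z\cdot c}=2^m\delta_{c,0}$ over $\{0,1\}^m$, and this is standard.
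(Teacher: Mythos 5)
Your proof is correct and is precisely the ``straightforward calculation'' that the paper leaves implicit: the $Z$-twirl with the character sum $\sum_{z}(-1)^{z\cdot(a\oplus b)}=2^m\delta_{a,b}$ kills the off-diagonal entries, the $X$-twirl uniformizes the diagonal, and $\mathrm{Tr}(\rho)=1$ finishes it. Nothing is missing.
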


\subsection{Pseudorandom Quantum States Generators}
\label{sec:PRSs}
Let us review pseudorandom quantum states generators (PRSGs)~\cite{C:JiLiuSon18,TCC:BraShm19,C:BraShm20}.
The definition of PRSGs is given as follows.

\begin{definition}[Pseudorandom quantum states generators (PRSGs)~\cite{C:JiLiuSon18,TCC:BraShm19,C:BraShm20}]
\label{definition:PRS}
A pseudorandom quantum states generator (PRSG) is a QPT
algorithm $\StateGen$ that, on input $k\in\{0,1\}^n$,
outputs an $m$-qubit quantum state $|\phi_k\rangle$.
As the security, we require the following: 
for any polynomial $t$ and any non-uniform QPT adversary $\cA$,  
there exists a negligible function $\negl$
such that for all $n$,
\begin{eqnarray*}
\Big|\Pr_{k\leftarrow \{0,1\}^n}\Big[\cA(|\phi_k\rangle^{\otimes t(n)})\to1\Big]
-\Pr_{|\psi\rangle\leftarrow \mu_m}\Big[\cA(|\psi\rangle^{\otimes t(n)})\to1\Big]\Big|
\le \negl(n),
\end{eqnarray*}
where $\mu_m$ is the Haar measure on $m$-qubit states.
\end{definition}

\begin{remark}
\label{remark:pure}
In the most general case, $\StateGen$ is the following QPT algorithm: 
given an input $k\in\bit^n$, it first computes a classical description of
a unitary quantum circuit $U_k$, and next applies $U_k$ on the all zero state $|0...0\rangle$
to generate $|\Phi_k\rangle_{AB}\coloneqq U_k|0...0\rangle$.
It finally outputs the $m$-qubit state $\rho_k\coloneqq \mbox{Tr}_B(|\Phi_k\rangle\langle\Phi_k|_{AB})$.
However, $\rho_k$ is, on average, almost pure,
because otherwise the security is broken by a QPT adversary who runs the SWAP test on two copies.\footnote{
Let us consider an adversary $\cA$ that runs the SWAP test on two copies of the received state and outputs the result of the SWAP test.
When $\rho_k^{\otimes t}$ is sent with uniformly random $k$, the probability that $\cA$ outputs 1 is $(1+\frac{1}{2^n}\sum_k\mbox{Tr}(\rho_k^2))/2$.
When the $t$ copies of Haar random states $|\psi\rangle^{\otimes t}$ is sent, the probability that $\cA$ outputs 1 is 1.
For the security, $|(1+\frac{1}{2^n}\sum_k\mbox{Tr}(\rho_k^2))/2-1|\le \negl(\lambda)$ has to be satisfied, which means 
the expected purity of $\rho_k$, $\frac{1}{2^n}\sum_k\mbox{Tr}(\rho_k^2)$, has to be negligibly close to 1.}
In this paper, for simplicity, we assume that $\rho_k$ is pure, and denote it by $|\phi_k\rangle$.
The same results hold even if it is negligibly close to pure.
What $\StateGen$ generates is therefore
$|\Phi_k\rangle_{AB}=|\phi_k\rangle_A\otimes |\eta_k\rangle_B$
with an ancilla state $|\eta_k\rangle$.
In this paper, for simplicity, we assume that there is no ancilla state in the final state generated by $\StateGen$, but
actually the same results hold even if ancilla states exist. (See Section~\ref{sec:commitments} and Section~\ref{sec:signatures}.)
Moreover, \cite{Kre21} considers the case with pure outputs and no ancilla state, and therefore 
restricting to the pure and no ancilla case is enough for
our purpose of showing the existence of commitments and digital signatures without one-way functions.
\end{remark}

Interestingly, what we actually need for our construction of commitments (Section~\ref{sec:commitments}) is a weaker version of PRSGs
where the security is satisfied only for $t=1$. 
We call them {\it single-copy-secure PRSGs}:
\begin{definition}[Single-copy-secure PRSGs]
\label{definition:1PRS}
A single-copy-secure pseudorandom quantum states generator (PRSG) is a QPT
algorithm $\StateGen$ that, on input $k\in\{0,1\}^n$,
outputs an $m$-qubit quantum state $|\phi_k\rangle$.
As the security, we require the following: 
for any non-uniform QPT adversary $\cA$,  
there exists a negligible function $\negl$
such that for all $n$,
\begin{eqnarray*}
\Big|\Pr_{k\leftarrow \{0,1\}^n}\Big[\cA(|\phi_k\rangle)\to1\Big]
-\Pr_{|\psi\rangle\leftarrow \mu_m}\Big[\cA(|\psi\rangle)\to1\Big]\Big|
\le \negl(n),
\end{eqnarray*}
where $\mu_m$ is the Haar measure on $m$-qubit states.
\end{definition}

\begin{remark}
Because a single copy of an $m$-qubit state sampled Haar randomly is equivalent to the $m$-qubit maximally-mixed state, $\frac{I^{\otimes m}}{2^m}$,
the security of single-copy-secure PRSGs is in fact
the computational indistinguishability of
a single copy of $|\phi_k\rangle$ from $\frac{I^{\otimes m}}{2^m}$.
\end{remark}

\if0
\begin{remark}
In this paper, we consider PRSGs and single-copy-secure PRSGs with  
$m\geq cn$ for a constant $c>1$. 
Although it is possible to construct PRSGs without this 
restriction~\cite{C:BraShm20}, this is satisfied in
the construction of \cite{Kre21}, and therefore enough for our purpose of
showing the existence of commitments and digital signatures without one-way functions.
(\cite{Kre21} gives only the case when $m=n$, but it is clear that
the result holds for $m\ge cn$ with constant $c>1$.)
Moreover,
there is a trivial construction of a single-copy-secure PRSG with $m=n$ without any assumption:
just take $|\phi_k\rangle=|k\rangle$ for all $k\in\bit^n$.
\end{remark}
\fi

\begin{remark}
As we have explained in Remark~\ref{remark:pure}, in the definition of PRSGs (Definition~\ref{definition:PRS}), the output state $\rho_k$ of $\StateGen$ has to be negligibly close to pure (on average).
When we consider single-copy-secure PRSGs (Definition~\ref{definition:1PRS}), on the other hand,
the SWAP-test attack does not work because only a single copy is available to
adversaries. In fact, there is a trivial construction of a single-copy-secure PRSG whose output is not
pure:
$\StateGen(k)\to\frac{I^{\otimes m}}{2^m}$ for all $k\in\bit^n$.
We therefore assume that the output of $\StateGen$ is pure, i.e., $\rho_k=|\phi_k\rangle\langle\phi_k|$, when we consider single-copy-secure PRSGs.
\end{remark}

\begin{remark}
It could be the case that single-copy-secure PRSGs are easier to realize than  
(multi-copy-secure) PRSGs.
In fact, the security proofs of the constructions of \cite{C:JiLiuSon18,TCC:BraShm19}
are simpler for $t=1$. Furthermore, there is a simple construction of a single-copy-secure PRSG by
using a pseudorandom generator $G:\{0,1\}^n\to\{0,1\}^m$. 
In fact, we have only to take $|\phi_k\rangle=|G(k)\rangle$ for all $k\in\bit^n$.
\end{remark}

\begin{remark}
One might think that a single-copy-secure PRSG with $m\ge n+1$ is a pseudorandom generator (PRG), because if the $m$-qubit state $|\phi_k\rangle$
is measured in the computational basis, the probability distribution of the measurement results
is computationally indistinguishable from that (i.e., the $m$-bit uniform distribution)
obtained when the $m$-qubit maximally mixed state $\frac{I^{\otimes m}}{2^m}$ is measured in the computational basis.
It is, however, strange because if it was true then the existence of single-copy-secure PRSGs implies the existence of PRGs, 
which contradicts \cite{Kre21}.
The point is that measuring $|\phi_k\rangle$ in the computational basis does not work as PRGs because the output is not deterministically obtained. 
(Remember that PRGs are deterministic algorithms.)
\end{remark}

\section{Commitments}
\label{sec:commitments}
In this section, we provide our construction of commitments, and show its security.

\subsection{Definition}
Let us first give a formal definition of non-interactive quantum commitments.

\begin{definition}[Non-interactive quantum commitments (Syntax)]
\label{definition:commitments}
A non-interactive quantum commitment scheme is the following protocol.
\begin{itemize}
\item
{\bf Commit phase:}
Let $b\in\{0,1\}$ be the bit to commit. 
The sender generates a quantum state $|\psi_b\rangle_{RC}$ on registers $R$ and $C$, and 
sends the register $C$ to the receiver.
The states $\{|\psi_b\rangle\}_{b\in\{0,1\}}$ can be generated in quantum polynomial-time from the all zero state.
\item
{\bf Reveal phase:}
The sender sends $b$ and the register $R$ to the receiver.
The receiver does the measurement $\{|\psi_b\rangle\langle\psi_b|,I-|\psi_b\rangle\langle\psi_b|\}$
on the registers $R$ and $C$. If the result is $|\psi_b\rangle\langle\psi_b|$,
the receiver outputs $b$. Otherwise, the receiver outputs $\bot$.
Because 
$\{|\psi_b\rangle\}_{b\in\{0,1\}}$ can be generated in quantum polynomial-time from the all zero state,
the measurement $\{|\psi_b\rangle\langle\psi_b|,I-|\psi_b\rangle\langle\psi_b|\}$
can be implemented efficiently.
\end{itemize}
\end{definition}

The perfect correctness is defined as follows:
\begin{definition}[Perfect correctness]
\label{definition:commitments_correctness}
A commitment scheme satisfies perfect correctness if the following is satisfied:
when the honest sender commits $b\in\bit$, the probability that the honest receiver opens $b$
is 1.
\end{definition}

The computational hiding is defined as follows:
\begin{definition}[Computational hiding]
Let us consider the following security game, $\mathsf{Exp}(b)$, with the parameter $b\in\{0,1\}$
between a challenger $\cC$ and a QPT adversary $\cA$.
\begin{enumerate}
\item
$\cC$ generates 
$|\psi_b\rangle_{RC}$ 
and sends the register $C$ to $\cA$.
\item
$\cA$ outputs $b'\in\{0,1\}$,
which is the output of the experiment.
\end{enumerate}
We say that a non-interactive quantum commitment scheme is computationally hiding if
for any QPT adversary $\cA$ there exists a negligible function $\negl$ such that,
\begin{eqnarray*}
\left|\Pr[\mathsf{Exp}(0)=1]-\Pr[\mathsf{Exp}(1)=1]\right|\le \negl(\lambda).
\end{eqnarray*}
\end{definition}

As the definition of binding, we consider sum-binding~\cite{AC:Unruh16} that is defined as follows:

\begin{definition}[Statistical sum-binding]
\label{definition:sumbinding}
Let us consider the following security game between a challenger $\cC$ and an unbounded adversary $\cA$:
\begin{enumerate}
\item
$\cA$ generates a quantum state $|\Psi\rangle_{ERC}$ on the three registers $E$, $R$, and $C$.
\item
$\cA$ sends the register $C$ to $\cC$, which is the commitment.
\item
If $\cA$ wants to make $\cC$ open $b\in\bit$, $\cA$ applies a unitary
$U_{ER}^{(b)}$ on the registers $E$ and $R$. $\cA$ sends $b$ and the register $R$ to $\cC$.
\item
$\cC$ does the measurement $\{|\psi_b\rangle\langle\psi_b|,I-|\psi_b\rangle\langle\psi_b|\}$ on
the registers $R$ and $C$. If the result $|\psi_b\rangle\langle\psi_b|$ is obtained, 
$\cC$ accepts $b$. Otherwise, $\cC$ outputs $\bot$.
\end{enumerate}
Let $p_b$ be the probability that $\cA$ makes $\cC$ open $b\in\{0,1\}$:
\begin{eqnarray*}
p_b
\coloneqq
\langle \psi_b|_{RC}{\rm Tr}_E(U_{ER}^{(b)}|\Psi\rangle\langle\Psi|_{ERC}U_{ER}^{(b)\dagger})|\psi_b\rangle_{RC}.
\end{eqnarray*}
We say that the commitment scheme is statistical sum-binding if
for any unbounded $\cA$ there exists a negligible function $\negl$
such that
\begin{eqnarray*}
p_0+p_1\le1+\negl(\lambda). 
\end{eqnarray*}
\end{definition}

\subsection{Construction}
Let us explain our construction of commitments.\footnote{
Another example of constructions is 
$|\psi_0\rangle=\sum_{k\in\{0,1\}^n}|k\rangle|\phi_k\rangle$
and
$|\psi_1\rangle=\sum_{r\in\{0,1\}^m}|r\rangle|r\rangle$.
We have chosen the one we have explained, because
the analogy to Naor's commitment scheme is clearer.} 
Let $\StateGen$ be a single-copy-secure PRSG that, on input $k\in\bit^n$,
outputs an $m$-qubit state $|\phi_k\rangle$.
The commit phase is the following.
\begin{enumerate}
\item
Let $b\in\{0,1\}$ be the bit to commit.
The sender generates 
\begin{eqnarray*}
|\psi_b\rangle
\coloneqq\frac{1}{\sqrt{2^{2m+n}}}
\sum_{x,z\in\{0,1\}^m}
\sum_{k\in\{0,1\}^n}
|x,z,k\rangle_R\otimes P_{x,z}^b|\phi_k\rangle_C,
\end{eqnarray*}
and sends the register $C$ to the receiver,
where $P_{x,z}\coloneqq\bigotimes_{j=1}^mX_j^{x_j}Z_j^{z_j}$.
\end{enumerate}
The reveal phase is the following.
\begin{enumerate}
\item
The sender sends the register $R$ and the bit $b$ to the receiver.
\item
The receiver measures the state with 
$\{|\psi_b\rangle\langle\psi_b|,I-|\psi_b\rangle\langle\psi_b|\}$. 
If the result is $|\psi_b\rangle\langle\psi_b|$, 
the receiver outputs $b$. Otherwise, the receiver outputs $\bot$.
(Note that such a measurement can be done efficiently: first apply $V_b^\dagger$ such that
$|\psi_b\rangle=V_b|0...0\rangle$, and then measure all qubits in the computational basis to see
whether all results are zero or not.)
\end{enumerate}
It is obvious that this construction satisfies perfect correctness (Definition~\ref{definition:commitments_correctness}).

\begin{remark}
Note that if we slightly modify the above construction, the communication in the reveal phase can be classical.
In fact, we can show it for general settings.
We will provide a detailed explanation of it in Appendix~\ref{sec:classical_opening}. Here, we give an intuitive argument.
In general non-interactive quantum commitments (Definition~\ref{definition:commitments}), the sender who wants to commit $b\in\bit$ generates a certain state $|\psi_b\rangle_{RC}$ on the registers $R$ and $C$,
and sends the register $C$ to the receiver, which is the commit phase.
In the reveal phase, $b$ and the register $R$ are sent to the receiver. The receiver runs the verification algorithm on the registers $R$ and $C$.
Let us modify it as follows. 
In the commit phase, the sender chooses uniform random $x,z\leftarrow\bit^{|R|}$ and applies 
$\bigotimes_{j=1}^{|R|}X_j^{x_j}Z_j^{z_j}$ 
on the register $R$ of $|\psi_b\rangle_{RC}$, where $|R|$ is the number of qubits in the register $R$.
The sender then sends both the registers $R$ and $C$ to the receiver. It ends the commit phase.
In the reveal phase, the sender sends the bit $b$ to open and $(x,z)$ to the receiver. The receiver applies 
$\bigotimes_{j=1}^{|R|}X_j^{x_j}Z_j^{z_j}$ on the register $R$ and runs the original verification algorithm.
Hiding is clear because the register $R$ is traced out to the receiver before the reveal phase due to the quantum one-time pad.
Binding is also easy to understand: Assume a malicious sender of the modified scheme can break binding. Then, we can construct a malicious sender that breaks binding of the original scheme,
because the malicious sender of the original scheme can simulate the malicious sender of the modified scheme. 
\end{remark}

\begin{remark}
We also note that our construction of commitments can be extended to more general cases
where ancilla qubits are used in PRSGs. Let us consider a more general PRSG
that generates $|\phi_k\rangle\otimes|\eta_k\rangle$ and outputs $|\phi_k\rangle$, where
$|\eta_k\rangle$ is an ancilla state.
In that case, hiding and binding hold if we replace $|\psi_b\rangle$ with
\begin{eqnarray*}
\frac{1}{\sqrt{2^{2m+n}}}
\sum_{x,z\in\{0,1\}^m}
\sum_{k\in\{0,1\}^n}
(|x,z,k\rangle\otimes|\eta_k\rangle)_R
\otimes P_{x,z}^b|\phi_k\rangle_C.
\end{eqnarray*}
\end{remark}

\subsection{Computational Hiding}
We show computational hiding of our construction.

\begin{theorem}[Computational hiding]
\label{theorem:hiding}
Our construction satisfies computational hiding.
\end{theorem}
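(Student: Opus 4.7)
The plan is to reduce computational hiding directly to the single-copy security of $\StateGen$ by first tracing out the register $R$ held by the sender and analyzing the reduced states on the receiver's register $C$. Since the adversary $\cA$ in the hiding game only ever sees $C$, the two experiments $\mathsf{Exp}(0)$ and $\mathsf{Exp}(1)$ differ only in the states
\begin{equation*}
\rho_b \coloneqq \mathrm{Tr}_R(|\psi_b\rangle\langle\psi_b|_{RC}),
\end{equation*}
so it suffices to prove $\rho_0$ and $\rho_1$ are computationally indistinguishable.

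First I would carry out the two traces. For $b=0$ the operators $P_{x,z}^0 = I$ and the $R$-register basis states are orthonormal, so a direct computation gives
\begin{equation*}
\rho_0 = \frac{1}{2^n}\sum_{k\in\{0,1\}^n} |\phi_k\rangle\langle\phi_k|.
\end{equation*}
For $b=1$ the $R$-register labels $(x,z,k)$ are again orthonormal, so tracing them out leaves the uniform mixture over quantum one-time pads applied to $|\phi_k\rangle$. Invoking \cref{lemma:QOTP} for each fixed $k$ collapses the inner average to the maximally mixed state, yielding
\begin{equation*}
\rho_1 = \frac{1}{2^n}\sum_{k\in\{0,1\}^n} \frac{1}{4^m}\sum_{x,z\in\{0,1\}^m} P_{x,z}|\phi_k\rangle\langle\phi_k|P_{x,z} = \frac{I^{\otimes m}}{2^m}.
\end{equation*}

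The key step is then a one-hybrid argument. Define an intermediate state equal to the single-copy average of Haar-random states, which equals $\frac{I^{\otimes m}}{2^m}$. By \cref{definition:1PRS}, no non-uniform QPT distinguisher can tell $\rho_0 = \mathbb{E}_k|\phi_k\rangle\langle\phi_k|$ apart from $\mathbb{E}_{|\psi\rangle\leftarrow \mu_m}|\psi\rangle\langle\psi| = \frac{I^{\otimes m}}{2^m}$ with non-negligible advantage; any such distinguisher would in particular distinguish a single copy of $|\phi_k\rangle$ from a Haar-random single copy with the same advantage via a standard averaging argument. Composing this with $\rho_1 = \frac{I^{\otimes m}}{2^m}$ and applying the triangle inequality yields the desired bound on $|\Pr[\mathsf{Exp}(0)=1]-\Pr[\mathsf{Exp}(1)=1]|$.

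I do not expect a substantive obstacle here: the quantum one-time pad identity is exact, and single-copy security is exactly tailored to this situation since the receiver's reduced view in the commit phase consists of only one copy of the PRSG output. The only minor care point is that, strictly speaking, the single-copy security is stated for a freshly sampled key $k$ whereas $\rho_0$ is the mixture over $k$; but a standard averaging/hybrid step converts a distinguisher for the mixture into one for a freshly sampled key without loss. Finally, I would remark that the same argument goes through verbatim if $\StateGen$ uses ancilla qubits $|\eta_k\rangle$, since the ancilla is kept in register $R$ and is traced out from the receiver's view.
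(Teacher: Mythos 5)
Your proof is correct and follows essentially the same route as the paper's: both rest on the quantum one-time pad identity $\rho_1=\frac{I^{\otimes m}}{2^m}$ and on single-copy PRSG security to argue $\rho_0=\frac{1}{2^n}\sum_k|\phi_k\rangle\langle\phi_k|$ is computationally indistinguishable from the maximally mixed state. The only difference is organizational: the paper runs two hybrid steps each with its own reduction, whereas you observe that the $b=1$ side is an exact information-theoretic equality and so need only a single reduction, which is a mild streamlining rather than a different argument.
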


\begin{proof}[Proof of Theorem~\ref{theorem:hiding}]
Let us consider the following security game, 
$\hyb_0(b)$, which is the same as the original
experiment.
\begin{enumerate}
\item
The challenger $\cC$ generates
\begin{eqnarray*}
|\psi_b\rangle=\frac{1}{\sqrt{2^{2m+n}}}
\sum_{x,z\in\{0,1\}^m}
\sum_{k\in\{0,1\}^n}
|x,z,k\rangle_R\otimes P_{x,z}^b|\phi_k\rangle_C,
\end{eqnarray*}
and sends
the register $C$ to the adversary $\cA$,
where $P_{x,z}\coloneqq\bigotimes_{j=1}^mX_j^{x_j}Z_j^{z_j}$.
\item
$\cA$ outputs $b'\in\{0,1\}$, which is the output of this hybrid.
\end{enumerate}

Let us define $\hyb_1(b)$ as follows:
\begin{enumerate}
\item
If $b=0$, $\cC$ 
chooses a Haar random $m$-qubit state $|\psi\rangle\leftarrow\mu_m$, and sends it to $\cA$.
If $b=1$, $\cC$ generates
$|\psi_1\rangle_{RC}$
and sends
the register $C$ to $\cA$.
\item
$\cA$ outputs $b'\in\{0,1\}$, which is the output of this hybrid.
\end{enumerate}

\begin{lemma}
\label{lemma:01}
\begin{eqnarray*}
\left|\Pr[\hyb_0(b)=1]-\Pr[\hyb_1(b)=1]\right|\le \negl(\lambda)
\end{eqnarray*}
for each $b\in\{0,1\}$.
\end{lemma}

\begin{proof}[Proof of Lemma~\ref{lemma:01}]
It is clear that 
\begin{eqnarray*}
\Prob[\hyb_0(1)=1]=\Prob[\hyb_1(1)=1].
\end{eqnarray*}
Let us show 
\begin{eqnarray*}
|\Prob[\hyb_0(0)=1]-\Prob[\hyb_1(0)=1]|\le \negl(\lambda).
\end{eqnarray*}
To show it, assume that 
\begin{eqnarray*}
|\Prob[\hyb_0(0)=1]-\Prob[\hyb_1(0)=1]|
\end{eqnarray*}
is non-negligible.
Then, we can construct an adversary $\cA'$ that breaks the security of PRSGs as follows.
Let $b''\in\{0,1\}$ be the parameter of the security game of PRSGs.
\begin{enumerate}
\item
The challenger $\cC'$ of the security game of PRSGs sends $\cA'$
the state $|\phi_k\rangle$ with uniform random $k$ if $b''=0$ 
and a Haar random state $|\psi\rangle\leftarrow \mu_m$ if $b''=1$.
\item
$\cA'$ sends the received state to $\cA$.
\item
$\cA'$ outputs the output of $\cA$. 
\end{enumerate}
If $b''=0$, it simulates $\hyb_0(0)$.
If $b''=1$, it simulates $\hyb_1(0)$.
Therefore, $\cA'$ breaks the security of PRSGs. 
\end{proof}

Let us define $\hyb_2(b)$ as follows:
\begin{enumerate}
\item
The challenger $\cC$
chooses a Haar random $m$-qubit state $|\psi\rangle\leftarrow\mu_m$, and sends it to the adversary.
\item
The adversary outputs $b'\in\{0,1\}$, which is the output of this hybrid.
\end{enumerate}

\begin{lemma}
\label{lemma:12}
\begin{eqnarray*}
|\Pr[\hyb_1(b)=1]-\Pr[\hyb_2(b)=1]|\le \negl(\lambda)
\end{eqnarray*}
for each $b\in\{0,1\}$.
\end{lemma}

\begin{proof}[Proof of Lemma~\ref{lemma:12}]
\begin{eqnarray*}
\Prob[\hyb_1(0)=1]=\Pr[\hyb_2(0)=1]
\end{eqnarray*}
is clear.
Let us show 
\begin{eqnarray*}
|\Pr[\hyb_1(1)=1]-\Pr[\hyb_2(1)=1]|\le \negl(\lambda).
\end{eqnarray*}
To show it, assume that 
\begin{eqnarray*}
|\Pr[\hyb_1(1)=1]-\Pr[\hyb_2(1)=1]|
\end{eqnarray*}
is non-negligible.
Then, we can construct an adversary $\cA'$ that breaks the security of PRSGs as follows.
Let $b''\in\{0,1\}$ be the parameter of the security game of PRSGs.
\begin{enumerate}
\item
The challenger $\cC'$ of the security game of PRSGs sends $\cA'$
the state $|\phi_k\rangle$ with uniform random $k$ if $b''=0$ and 
a Haar random state $|\psi\rangle\leftarrow \mu_m$ if $b''=1$.
\item
$\cA'$ applies $X^xZ^z$ with uniform random $x,z\leftarrow\{0,1\}^m$, 
and sends the state to $\cA$.
\item
$\cA'$ outputs the output of $\cA$. 
\end{enumerate}
If $b''=0$, it simulates $\hyb_1(1)$.
If $b''=1$, it simulates $\hyb_2(1)$.
Therefore, $\cA'$ breaks the security of PRSGs. 
\end{proof}

It is obvious that 
\begin{eqnarray*}
\Pr[\hyb_2(0)=1]=\Pr[\hyb_2(1)=1].
\end{eqnarray*}
Therefore, from Lemma~\ref{lemma:01} and Lemma~\ref{lemma:12}, we conclude
\begin{eqnarray*}
|\Pr[\hyb_0(0)=1]-\Pr[\hyb_0(1)=1]|\le\negl(\lambda),
\end{eqnarray*}
which shows Theorem~\ref{theorem:hiding}.
\end{proof}

\subsection{Statistical Binding}

Let us show that our construction satisfies statistical sum-binding.

\begin{theorem}[Statistical sum-binding]
\label{theorem:binding}
Our construction satisfies statistical sum-binding.
\end{theorem}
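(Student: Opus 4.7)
The plan is to reduce statistical sum-binding to a bound on the fidelity between the reduced states of $|\psi_0\rangle$ and $|\psi_1\rangle$ on the commitment register $C$, and then to show that this fidelity is negligible by a dimension/rank argument.

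First I would compute the reduced states $\rho_b^C \coloneqq \mathrm{Tr}_R(|\psi_b\rangle\langle\psi_b|_{RC})$. Tracing out $R$ in $|\psi_0\rangle$ gives $\rho_0^C = \frac{1}{2^n}\sum_k |\phi_k\rangle\langle\phi_k|$, since the $x,z$ register decouples. For $b=1$, tracing out $R$ yields $\rho_1^C = \frac{1}{2^n}\sum_k \frac{1}{4^m}\sum_{x,z} P_{x,z}|\phi_k\rangle\langle\phi_k|P_{x,z}$, which equals $\frac{I^{\otimes m}}{2^m}$ by the quantum one-time pad (Lemma~\ref{lemma:QOTP}). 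So the question is reduced to bounding $F\bigl(\frac{1}{2^n}\sum_k |\phi_k\rangle\langle\phi_k|,\ \frac{I^{\otimes m}}{2^m}\bigr)$.

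Next I would carry out the reduction from sum-binding to fidelity. Let $\sigma_C$ be the reduced state of the adversary's joint state $U^{(b)}_{ER}|\Psi\rangle_{ERC}$ on $C$ (note that this state is independent of $b$ since the unitary acts on $ER$). Since $|\psi_b\rangle_{RC}$ is a purification of $\rho_b^C$ and $U^{(b)}_{ER}|\Psi\rangle_{ERC}$ is a purification of $\sigma_C$, Uhlmann's theorem gives $p_b \le F(\sigma_C,\rho_b^C)^2$. Writing Bures angles $\alpha=\arccos F(\sigma_C,\rho_0^C)$, $\beta=\arccos F(\sigma_C,\rho_1^C)$, $\gamma=\arccos F(\rho_0^C,\rho_1^C)$, the triangle inequality for Bures angle gives $\alpha+\beta\ge\gamma$. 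Maximizing $\cos^2\alpha+\cos^2\beta$ under this constraint (the function is decreasing, so the maximum is attained on $\alpha+\beta=\gamma$, and a quick optimization shows the max is at $\alpha=\beta=\gamma/2$) yields $p_0+p_1 \le 1+\cos\gamma = 1+F(\rho_0^C,\rho_1^C)$.

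Finally, I would bound the fidelity via a rank argument. Because the second state is the maximally mixed state, $F(\rho_0^C, I^{\otimes m}/2^m) = 2^{-m/2}\,\mathrm{Tr}\sqrt{\rho_0^C}$. The state $\rho_0^C=\frac{1}{2^n}\sum_k|\phi_k\rangle\langle\phi_k|$ has rank at most $2^n$, so by Cauchy--Schwarz applied to its eigenvalues, $\mathrm{Tr}\sqrt{\rho_0^C}\le\sqrt{2^n\,\mathrm{Tr}\,\rho_0^C}=2^{n/2}$. Combining, $F(\rho_0^C,\rho_1^C)\le 2^{(n-m)/2}\le 2^{-(c-1)n/2}$, which is negligible since $c>1$. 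Plugging into the previous step gives $p_0+p_1\le 1+\negl(\lambda)$, proving statistical sum-binding.

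The main obstacle I anticipate is the reduction step: getting a clean bound of the form $p_0+p_1\le 1+F(\rho_0^C,\rho_1^C)$ rather than the weaker $\sqrt{p_0}+\sqrt{p_1}\le\sqrt{2(1+F)}$ that one obtains by naively squaring. The key is to stay at the level of $\cos^2\alpha+\cos^2\beta$ instead of $(\cos\alpha+\cos\beta)^2$; this is where the argument needs care, and this is exactly what makes the factor of $2$ disappear. The remaining pieces (computing the reductions and the rank bound) are routine. Notably, neither the computational assumption on the PRSG nor multi-copy security is used in this proof: statistical sum-binding follows purely from the $m>n$ dimension gap.
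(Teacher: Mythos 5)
Your proof is correct and follows essentially the same route as the paper's: reduce $p_b$ to the fidelity between the reduced commitment-register states via monotonicity/Uhlmann, observe that the quantum one-time pad makes $\rho_1^C$ maximally mixed, and bound $F\bigl(\tfrac{1}{2^n}\sum_k|\phi_k\rangle\langle\phi_k|,\ \tfrac{I^{\otimes m}}{2^m}\bigr)$ by a rank-plus-Cauchy--Schwarz argument, exactly as in the paper. The only difference is that where the paper invokes the Nayak--Shor inequality $F(\rho,\xi)+F(\sigma,\xi)\le 1+\sqrt{F(\rho,\sigma)}$ as a cited black box, you re-derive it from the Bures-angle triangle inequality; your derivation is valid and self-contained, and your closing observation that binding is purely information-theoretic (independent of PRSG security) also matches the paper.
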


\begin{proof}[Proof of Theorem~\ref{theorem:binding}]
Let
\begin{eqnarray*}
F(\rho,\sigma):=\Big(\mbox{Tr}\sqrt{\sqrt{\sigma}\rho\sqrt{\sigma}}\Big)^2
\end{eqnarray*}
be the fidelity between two states $\rho$ and $\sigma$.
Then we have
\begin{eqnarray*}
p_b
&=&
\langle \psi_b|_{RC}\mbox{Tr}_E(U_{ER}^{(b)}|\Psi\rangle\langle\Psi|_{ERC}U_{ER}^{(b)\dagger})|\psi_b\rangle_{RC}\\
&=&
F\Big(|\psi_b\rangle_{RC},\mbox{Tr}_E(U_{ER}^{(b)}|\Psi\rangle\langle\Psi|_{ERC}U_{ER}^{(b)\dagger})\Big)\\
&\le&
F\Big(\mbox{Tr}_R(|\psi_b\rangle\langle\psi_b|_{RC}),\mbox{Tr}_{RE}(U_{ER}^{(b)}|\Psi\rangle\langle\Psi|_{ERC}
U_{ER}^{(b)\dagger})\Big)\\
&=&
F\Big(\mbox{Tr}_R(|\psi_b\rangle\langle\psi_b|_{RC}),\mbox{Tr}_{RE}(|\Psi\rangle\langle\Psi|_{ERC})\Big).
\end{eqnarray*}
Here, we have used the facts that
if $\sigma=|\sigma\rangle\langle\sigma|$,
$F(\rho,\sigma)=\langle\sigma|\rho|\sigma\rangle$,
and that for any bipartite states $\rho_{AB},\sigma_{AB}$,
$F(\rho_{AB},\sigma_{AB})\le F(\rho_A,\sigma_A)$,
where $\rho_A=\mbox{Tr}_B(\rho_{AB})$
and $\sigma_A=\mbox{Tr}_B(\sigma_{AB})$.

Therefore,
\begin{eqnarray*}
p_0+p_1
&\le&
1+\sqrt{F\Big(\mbox{Tr}_R(|\psi_0\rangle\langle\psi_0|_{RC}),\mbox{Tr}_R(|\psi_1\rangle\langle\psi_1|_{RC})\Big)}\\
&=&
1+\sqrt{F\Big(
\frac{1}{2^n}\sum_k|\phi_k\rangle\langle\phi_k|, 
\frac{1}{2^{2m}}\frac{1}{2^n}\sum_{x,z}\sum_kX^xZ^z|\phi_k\rangle\langle\phi_k|X^xZ^z
\Big)}\\
&=&
1+\sqrt{F\Big(
\frac{1}{2^n}\sum_k|\phi_k\rangle\langle\phi_k|, 
\frac{I^{\otimes m}}{2^m}
\Big)}\\
&=&1+\Big\|\sum_{i=1}^\xi\sqrt{\lambda_i}\frac{1}{\sqrt{2^m}}|\lambda_i\rangle\langle\lambda_i|\Big\|_1\\
&=&1+\sum_{i=1}^\xi\sqrt{\lambda_i}\frac{1}{\sqrt{2^m}}\\
&\le&1+\sqrt{\sum_{i=1}^\xi\lambda_i}\sqrt{\sum_{i=1}^\xi\frac{1}{2^m}}\\
&\le&1+\sqrt{\frac{2^n}{2^m}}\\ 
&\le&1+\frac{1}{\sqrt{2^{(c-1)n}}}.
\end{eqnarray*}
In the first inequality, we
have used the fact that for any states $\rho,\sigma,\xi$,
\begin{eqnarray*}
F(\rho,\xi)+F(\sigma,\xi)\le 1+\sqrt{F(\rho,\sigma)}
\end{eqnarray*}
is satisfied~\cite{NayakShor}.
In the fourth equality,
$\sum_{i=1}^\xi\lambda_i|\lambda_i\rangle\langle\lambda_i|$
is the diagonalization of $\frac{1}{2^n}\sum_k|\phi_k\rangle\langle\phi_k|$. 
In the sixth inequality, we have used 
Cauchy–Schwarz inequality. 
In the seventh inequality, we have used $\xi\le2^n$.
In the last inequality, we have used 
$m\ge cn$ for a constant $c>1$. 
\end{proof}

\section{Digital Signatures}
\label{sec:signatures}
In this section, we provide our construction of digital signatures and show its security.
For that goal, we first define OWSGs (Definition~\ref{definition:onewayness}), 
and show that PRSGs imply OWSGs (Lemma~\ref{lemma:hiding2oneway}).

\subsection{One-way Quantum States Generators}
For the construction of our signature scheme, we introduce OWSGs, which are defined as follows:

\begin{definition}[One-way quantum states generators (OWSGs)]
\label{definition:onewayness}
Let $G$ be a QPT algorithm that, on input $k\in\bit^n$, outputs a quantum state $|\phi_k\rangle$.
Let us consider the following security game, $\mathsf{Exp}$, between a challenger $\cC$
and a QPT adversary $\cA$:
\begin{enumerate}
\item
$\cC$ chooses $k\leftarrow\bit^n$.
\item
$\cC$ runs $|\phi_k\rangle\leftarrow G(k)$ $t+1$ times.
\item
$\cC$ sends $|\phi_k\rangle^{\otimes t}$ to $\cA$.
\item
$\cA$ sends $\sigma\in\bit^n$ to $\cC$. 
\item
$\cC$ measures $|\phi_k\rangle$ with $\{|\phi_\sigma\rangle\langle\phi_\sigma|,I-|\phi_\sigma\rangle\langle\phi_\sigma|\}$.
If the result is $|\phi_\sigma\rangle\langle\phi_\sigma|$, the output of the experiment is 1.
Otherwise, the output of the experiment is 0.
\end{enumerate}
We say that $G$ is a one-way quantum states generator (OWSG) if
for any $t=poly(n)$ and for any QPT adversary $\cA$ there exists a negligible function $\negl$ such that
\begin{eqnarray*}
\Pr[\mathsf{Exp}=1]\le\negl(n).
\end{eqnarray*}
\end{definition}

\begin{remark}
Note that another natural definition of one-wayness is that
given $|\phi_k\rangle^{\otimes t}$ it is hard to find $k$.
However, as we will see later, it is not useful for our construction of digital signatures.
\end{remark}

\begin{remark}
The most general form of $G$ is as follows: on input $k\in\bit^n$, it computes a classical description of a unitary
quantum circuit $U_k$, and applies $U_k$
on $|0...0\rangle$ to generate $|\Phi_k\rangle_{AB}\coloneqq U_k|0...0\rangle$, and outputs
$\rho_k\coloneqq\mbox{Tr}_B(|\Phi_k\rangle\langle\Phi_k|_{AB})$.
However, because $\rho_k$ plays the role of a public key in our construction of digital signatures, we assume that $\rho_k$ is pure.
(It is not natural if public keys and secret keys are entangled.)
In that case, $U_k|0...0\rangle=|\phi_k\rangle_A\otimes|\eta_k\rangle_B$, where $|\eta_k\rangle$
is an ancilla state. For simplicity, we assume that there is no ancilla state:
$U_k|0...0\rangle=|\phi_k\rangle$.
In that case, the measurement $\{|\phi_\sigma\rangle\langle\phi_\sigma|,I-|\phi_\sigma\rangle\langle\phi_\sigma|\}$ 
by the challenger in Definition~\ref{definition:onewayness} can be done as follows: the challenger first applies $U_\sigma^\dagger$ on the state and then measures
all qubits in the comptuational basis. The all zero measurement result corresponds
to $|\phi_\sigma\rangle\langle\phi_\sigma|$ and other results correspond to $I-|\phi_\sigma\rangle\langle\phi_\sigma|$.
Even if ancilla states exist, however, the same result holds.
In that case, the verification of the challenger in Definition~\ref{definition:onewayness} is modified as follows: given $\sigma$, it generates $U_\sigma|0...0\rangle=|\phi_\sigma\rangle_A\otimes|\eta_\sigma\rangle_B$ to obtain $|\eta_\sigma\rangle$, 
applies $U_\sigma^\dagger$ on $|\phi_k\rangle\otimes|\eta_\sigma\rangle$,
and measures all qubits in the computational basis. 
If the result is all zero, it accepts, i.e., the output of the experiment is 1.
Otherwise, it rejects.
\end{remark}

We can show the following:

\begin{lemma}[PRSGs imply OWSGs]
\label{lemma:hiding2oneway}
If a pseudorandom quantum states generator with $m\ge cn$ for a constant $c>1$ exists, then a one-way quantum states generator exists.
\end{lemma}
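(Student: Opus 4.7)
The plan is to show that the PRSG $\StateGen$ is itself a one-way quantum states generator. I will proceed by contradiction: assume a QPT adversary $\cA$ wins the OWSG experiment against $\StateGen$ using $t = \poly(n)$ copies with non-negligible probability $p(n)$; I will build a QPT distinguisher $\cB$ that violates the $(t+1)$-copy pseudorandomness of $\StateGen$.

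The reduction $\cB$ is direct. On input $\rho^{\otimes t+1}$, where $\rho$ is either $|\phi_k\rangle\langle\phi_k|$ with $k\leftarrow\bit^n$ or $|\psi\rangle\langle\psi|$ with $|\psi\rangle\leftarrow\mu_m$, $\cB$ forwards $t$ copies to $\cA$ to obtain a candidate $\sigma\in\bit^n$, generates a fresh $|\phi_\sigma\rangle \leftarrow \StateGen(\sigma)$ to form the projector $|\phi_\sigma\rangle\langle\phi_\sigma|$, measures the remaining copy with $\{|\phi_\sigma\rangle\langle\phi_\sigma|, I - |\phi_\sigma\rangle\langle\phi_\sigma|\}$, and outputs $1$ iff the first outcome occurs. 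In the PRSG case the view of $\cA$ and the verification step exactly match the OWSG experiment, so $\Pr[\cB \to 1] \ge p(n)$.

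The main step is bounding $\cB$'s acceptance probability in the Haar case. Let $p_\cA(\sigma\mid|\psi\rangle^{\otimes t})$ denote the probability that $\cA$ outputs $\sigma$ when given $|\psi\rangle^{\otimes t}$. Then
\[
\Pr[\cB \to 1 \mid \text{Haar}] = \Exp_{|\psi\rangle\leftarrow\mu_m}\!\Big[\sum_{\sigma\in\bit^n} p_\cA(\sigma\mid|\psi\rangle^{\otimes t})\,|\langle\phi_\sigma|\psi\rangle|^2\Big] \le \sum_{\sigma\in\bit^n}\Exp_{|\psi\rangle\leftarrow\mu_m}[|\langle\phi_\sigma|\psi\rangle|^2] = \frac{2^n}{2^m} \le 2^{-(c-1)n},
\]
where I use $p_\cA \le 1$ to pull the sum outside the expectation, the standard identity $\Exp_{|\psi\rangle\leftarrow\mu_m}[|\langle\phi|\psi\rangle|^2] = 2^{-m}$ for any fixed $|\phi\rangle$, and finally $m \ge cn$. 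Since $c>1$ this is $\negl(n)$, so $\cB$ achieves distinguishing advantage $p(n) - \negl(n)$, contradicting the security of $\StateGen$.

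The only substantive idea is the dimension-counting step: the candidate set $\{|\phi_\sigma\rangle\}_{\sigma\in\bit^n}$ has size at most $2^n$, whereas a Haar random state lives in a $2^m$-dimensional space with $2^m \ge 2^{cn}$, so no QPT strategy can produce an $n$-bit label whose associated state has more than a $2^{-(c-1)n}$ overlap with a Haar random target on average. Everything else is routine. This is exactly the role the constant $c>1$ played in the sum-binding proof for the commitment scheme, and I expect no additional obstacle beyond making the union-bound step above precise.
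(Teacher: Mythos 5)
Your proposal is correct and follows essentially the same route as the paper: the reduction feeds $t$ of the $t+1$ copies to the OWSG adversary, verifies the returned $\sigma$ on the held-out copy, and bounds the Haar-case acceptance by $\sum_{\sigma\in\bit^n}\langle\phi_\sigma|\tfrac{I^{\otimes m}}{2^m}|\phi_\sigma\rangle\le 2^{n-m}\le 2^{-(c-1)n}$, exactly as in the paper's proof of Lemma~\ref{lemma:hiding2oneway}. No gaps.
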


\begin{proof}[Proof of Lemma~\ref{lemma:hiding2oneway}]
Assume that $\Pr[\mathsf{Exp}=1]$ of the security game of Definition~\ref{definition:onewayness} with $G=\StateGen$ is non-negligible. Then we can construct an adversary $\cA'$ that breaks the security of PRSGs as follows.
Let $b'\in\bit$ be the parameter of the security game for PRSGs.
\begin{enumerate}
\item
If $b'=0$, the challenger $\cC'$ of the security game for PRSGs chooses $k\leftarrow\bit^n$, runs $|\phi_k\rangle\leftarrow\StateGen(k)$ $t+1$ times, and sends $|\phi_k\rangle^{\otimes t+1}$ to $\cA'$.
If $b'=1$, the challenger $\cC'$ of the security game for PRSGs sends $t+1$ copies of Haar random state $|\psi\rangle^{\otimes t+1}$ to $\cA'$.
In other words, $\cA'$ receives $\rho^{\otimes t+1}$, where $\rho=|\phi_k\rangle$ if $b'=0$ and $\rho=|\psi\rangle$ if $b'=1$.
\item
$\cA'$ sends $\rho^{\otimes t}$ to $\cA$.
\item
$\cA$ outputs $\sigma\in\bit^n$.
\item
$\cA'$ measures $\rho$ with $\{|\phi_\sigma\rangle\langle\phi_\sigma|,I-|\phi_\sigma\rangle\langle\phi_\sigma|\}$.
If the result is $|\phi_\sigma\rangle\langle\phi_\sigma|$, $\cA'$ outputs 1.
Otherwise, $\cA'$ outputs 0.
\end{enumerate}
It is clear that 
\begin{eqnarray*}
\Pr[\cA'\to1|b'=0]=\Pr[\mathsf{Exp}=1].
\end{eqnarray*}
By assumption, $\Pr[\mathsf{Exp}=1]$ is non-negligible, and therefore
$\Pr[\cA'\to1|b'=0]$ is also non-negligible. 
On the other hand,
\begin{eqnarray*}
\Pr[\cA'\to1|b'=1]
&=&
\int d\mu(\psi)
\sum_{\sigma\in\bit^n}\Pr[\sigma\leftarrow \cA(|\psi\rangle^{\otimes t})]|\langle\phi_\sigma|\psi\rangle|^2\\
&\le&
\int d\mu(\psi) \sum_{\sigma\in\bit^n}|\langle\phi_\sigma|\psi\rangle|^2\\
&=&
\sum_{\sigma\in\bit^n}\langle\phi_\sigma|\Big[\int d\mu(\psi)|\psi\rangle\langle\psi|\Big]|\phi_\sigma\rangle\\
&=&
\sum_{\sigma\in\bit^n}\langle\phi_\sigma|\frac{I^{\otimes m}}{2^m}|\phi_\sigma\rangle\\
&\le&\frac{2^n}{2^m}\\
&\le&\frac{1}{2^{(c-1)n}}.
\end{eqnarray*}
Therefore, $\cA'$ breaks the security of PRSGs.
\end{proof}

\begin{remark}
For simplicity, Lemma~\ref{lemma:hiding2oneway} considers the case when no ancilla state exists in
PRSGs. 
It is easy to see that Lemma~\ref{lemma:hiding2oneway}
can be generalized to the case when PRSGs have ancilla states:
on input $k\in\bit^n$, a PRSG applies
$U_k$ on $|0...0\rangle$ to generate
$U_k|0...0\rangle=|\phi_k\rangle\otimes|\eta_k\rangle$, where $|\phi_k\rangle$
is the output of the PRSG and $|\eta_k\rangle$ is an ancilla state.
In that case, we modify Definition~\ref{definition:onewayness} in such a way that
the verification of the challenger is modified as follows: given $\sigma$, it generates
$U_\sigma|0...0\rangle=|\phi_\sigma\rangle\otimes|\eta_\sigma\rangle$ to obtain $|\eta_\sigma\rangle$, applies $U_\sigma^\dagger$ on $|\phi_k\rangle\otimes|\eta_\sigma\rangle$,
and measures all qubits in the computational basis. If the result is all zero, it accepts, i.e., the output of the experiment
is 1.
\end{remark}

\subsection{Definition of Digital Signatures with Quantum Public Keys}
We now formally define digital signatures with quantum public keys:

\begin{definition}[Digital signatures with quantum public keys (Syntax)]
\label{definition:signatures}
A signature scheme with quantum public keys is the set of algorithms $(\mathsf{Gen}_1,\mathsf{Gen}_2,\mathsf{Sign},\mathsf{Verify})$ such that
\begin{itemize}
\item
$\mathsf{Gen}_1(1^\lambda)$: It is a classical PPT algorithm that, on input the security parameter $1^\lambda$, outputs a classical secret key $sk$.
\item
$\mathsf{Gen}_2(sk)$: It is a QPT algorithm that, on input the secret key $sk$, outputs a 
quantum public key $pk$. 
\item
$\mathsf{Sign}(sk,m)$: It is a classical deterministic polynomial-time algorithm that, on input the secret key $sk$ and a message $m$, outputs a classical signature $\sigma$.
\item
$\mathsf{Verify}(pk,m,\sigma)$: It is a QPT algorithm that, on input a public key $pk$, the message $m$, and the signature $\sigma$, outputs $\top/\bot$.
\end{itemize}
\end{definition}

The perfect correctness is defines as follows:
\begin{definition}[Perfect correctness]
\label{definition:signature_correctness}
We say that a signature scheme satisfies perfect correctness if
\begin{eqnarray*}
\Pr[\top\leftarrow\mathsf{Verify}(pk,m,\sigma):sk\leftarrow\Gen_1(1^\lambda),pk\leftarrow \Gen_2(sk),\sigma\leftarrow\Sign(sk,m)]
=1
\end{eqnarray*}
for all messages $m$.
\end{definition}

The one-time security is defined as follows:
\begin{definition}[One-time security of digital signatures with quantum public keys]
\label{definition:one-time_security}
Let us consider the following security game, $\mathsf{Exp}$, between a challenger $\cC$ and a QPT adversary $\cA$:
\begin{enumerate}
\item
$\cC$ runs $sk\leftarrow\mathsf{Gen}_1(1^\lambda)$. 
\item
$\cA$ can query $pk\leftarrow \mathsf{Gen}_2(sk)$ $poly(\lambda)$ times.
\item
$\cA$ sends a message $m$ to $\cC$.
\item
$\cC$ runs $\sigma\leftarrow\mathsf{Sign}(sk,m)$, and sends $\sigma$ to $\cA$.
\item
$\cA$ sends $\sigma'$ and $m'$ to $\cC$.
\item
$\cC$ runs $v\leftarrow\mathsf{Verify}(pk,m',\sigma')$.
If $m'\neq m$ and $v=\top$, $\cC$ outputs 1. 
Otherwise, $\cC$ outputs 0. This $\cC$'s output is the output of the game.
\end{enumerate}
A signature scheme with quantum public keys is one-time secure if 
for any QPT adversary $\cA$ there exists a negligible function $\negl$ such that
\begin{eqnarray*}
\Pr[\mathsf{Exp}=1]\le\negl(\lambda).
\end{eqnarray*}
\end{definition}

\subsection{Construction}
Let $G$ be a OWSG.
Our construction of a one-time secure signature scheme with quantum public keys is as follows.
(For simplicity, we consider the case when the message space is $\bit$.)
\begin{itemize}
\item
$\mathsf{Gen}_1(1^n)$: Choose $k_0,k_1\leftarrow\bit^n$. Output $sk\coloneqq(sk_0,sk_1)$, where $sk_b\coloneqq k_b$ for $b\in\bit$.
\item
$\mathsf{Gen}_2(sk)$: Run $|\phi_{k_b}\rangle\leftarrow G(k_b)$ for $b\in\bit$.
Output $pk\coloneqq(pk_0,pk_1)$, where $pk_b\coloneqq|\phi_{k_b}\rangle$ for $b\in\bit$.
\item
$\mathsf{Sign}(sk,m)$: Output $\sigma\coloneqq sk_m$.
\item
$\mathsf{Verify}(pk,m,\sigma)$:
Measure $pk_m$ with $\{|\phi_\sigma\rangle\langle\phi_\sigma|,I-|\phi_\sigma\rangle\langle\phi_\sigma|\}$,
and output $\top$ if the result is $|\phi_\sigma\rangle\langle\phi_\sigma|$.
Otherwise, output $\bot$.
\end{itemize}
It is clear that this construction satisfies perfect correctness (Definition~\ref{definition:signature_correctness}).

\subsection{Security}
Let us show the security of our construction.
\begin{theorem}
\label{theorem:onetime_secure}
Our construction of a signature scheme is one-time secure. 
\end{theorem}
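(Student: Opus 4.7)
The plan is to reduce one-time unforgeability directly to the one-wayness of $G$ as an OWSG (Definition~\ref{definition:onewayness}). Given a QPT forger $\cA$ that wins the one-time security game with non-negligible probability $\epsilon(\lambda)$, I will build a QPT adversary $\cB$ against the OWSG game that wins with probability at least $\epsilon(\lambda)/2$, contradicting the assumption that $G$ is a OWSG.

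Concretely, let $q=q(\lambda)$ be a polynomial upper bound on the number of public key queries made by $\cA$. Then $\cB$ plays against an OWSG challenger with parameter $t=q$, receiving $|\phi_k\rangle^{\otimes q}$ for a uniformly random $k\leftarrow\bit^n$. $\cB$ samples a fair coin $b\leftarrow\bit$ as a guess for the message that $\cA$ will ask to sign, and an independent $k_b\leftarrow\bit^n$; implicitly $\cB$ sets $k_{1-b}\coloneqq k$, so the ``unknown'' half of the secret key is the OWSG challenge. To answer each of $\cA$'s public key queries, $\cB$ runs $G(k_b)$ itself to produce a fresh $|\phi_{k_b}\rangle$, pairs it with one of its $q$ challenge copies of $|\phi_k\rangle$ in the correct order to form $(pk_0,pk_1)$, and returns it. When $\cA$ requests a signature on $m\in\bit$, $\cB$ aborts if $m\neq b$ and otherwise returns $k_b$. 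When $\cA$ outputs a forgery $(m',\sigma')$ with $m'\neq m$, necessarily $m'=1-b$, so $\cB$ forwards $\sigma'$ to the OWSG challenger.

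For the analysis, I would argue that conditioned on the event $b=m$ (which holds with probability exactly $1/2$ since $b$ is sampled independently of $\cA$'s view up to the signing query, as that view depends on $b$ only through $k_b$ whose marginal is uniform on $\bit^n$), the distribution $\cB$ simulates for $\cA$ is identical to the real one-time game: the pair $(k_0,k_1)$ is uniform on $\bit^n\times\bit^n$, each public key response is a single fresh copy of $(|\phi_{k_0}\rangle,|\phi_{k_1}\rangle)$, and the signing response equals $k_m$ as prescribed. Under this conditioning, the OWSG challenger's verification on its remaining copy of $|\phi_k\rangle$ with $\{|\phi_{\sigma'}\rangle\langle\phi_{\sigma'}|,I-|\phi_{\sigma'}\rangle\langle\phi_{\sigma'}|\}$ coincides exactly with $\mathsf{Verify}(pk,1-b,\sigma')$ on a freshly generated public key, so $\cB$ succeeds in the OWSG game whenever $\cA$ succeeds in forging. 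This yields $\Pr[\cB\text{ wins}]\geq\tfrac{1}{2}\,\Pr[\cA\text{ wins}]$, non-negligible by assumption, a contradiction. I do not anticipate a serious obstacle; the only mildly delicate points are (i) that $\cB$ must request enough challenge copies to cover all public key queries, which is fine because OWSG security holds for every polynomial $t$, and (ii) that the guessing bit $b$ be independent of $\cA$'s view up until the signing query, addressed above.
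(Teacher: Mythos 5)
Your proposal is correct and follows essentially the same reduction as the paper: guess (via a fair coin) which message the forger will ask to sign, embed the OWSG challenge copies as the public key for the other message, answer the signing query with the self-sampled key, forward the forgery to the OWSG challenger, and lose a factor of $1/2$ from the guess. The only cosmetic difference is that your bit $b$ denotes the position of the known key while the paper's bit $r$ denotes the position of the challenge key, and the paper verifies the perfect-simulation-conditioned-on-a-correct-guess claim by an explicit probability calculation rather than the independence argument you give.
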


\begin{proof}[Proof of Theorem~\ref{theorem:onetime_secure}]
Let us consider the following security game, $\mathsf{Exp}$, between the challenger $\cC$ and a QPT adversary $\cA$:
\begin{enumerate}
\item
$\cC$ chooses $k_0,k_1\leftarrow\bit^n$.
\item
$\cA$ can query $|\phi_{k_b}\rangle\leftarrow G(k_b)$ $poly(n)$ times for $b\in\bit$.
\item
$\cA$ sends $m$ to $\cC$.
\item
$\cC$ sends $k_m$ to $\cA$.
\item
$\cA$ sends $\sigma$ to $\cC$.
\item
$\cC$ measures $|\phi_{k_{m\oplus1}}\rangle$ with $\{|\phi_\sigma\rangle\langle\phi_\sigma|,I-|\phi_\sigma\rangle\langle\phi_\sigma|\}$.
If the result is $|\phi_\sigma\rangle\langle\phi_\sigma|$, $\cC$ outputs 1.
Otherwise, $\cC$ outputs 0. This $\cC$'s output is the output of the game.
\end{enumerate}
Assume that our construction is not one-time secure, which means that $\Pr[\mathsf{Exp}=1]$ is non-negligible for
an adversary $\cA$ who queries both $\mathsf{Gen}_2(sk_0)$ and $\mathsf{Gen}_2(sk_1)$ $s=poly(n)$ times.
(Without loss of generality, we can assume that the numbers of $\cA$'s queries to $\mathsf{Gen}_2(sk_0)$ and $\mathsf{Gen}_2(sk_1)$ are the same.
An adversary who queries to $\mathsf{Gen}_2(sk_0)$ $s_0$ times and to $\mathsf{Gen}_2(sk_1)$ $s_1$ times can be simulated by another adversary
who queries to both $\mathsf{Gen}_2(sk_0)$ and $\mathsf{Gen}_2(sk_1)$ $s\coloneqq\max(s_0,s_1)$ times.)
Then, we can construct an adversary that breaks the security of OWSG $G$ as follows.
Let $\cC'$ and $\cA'$ be the challenger and the adversary of the security game of $G$, respectively.
\begin{enumerate}
\item
$\cC'$ chooses $k\leftarrow\bit^n$.
$\cC'$ runs $|\phi_k\rangle\leftarrow G(k)$ $s+1$ times.
$\cC'$ sends $|\phi_k\rangle^{\otimes s}$ to $\cA'$.
\item
$\cA'$ chooses $r\leftarrow\bit$.
$\cA'$ chooses $k'\leftarrow\bit^n$.
$\cA'$ runs $|\phi_{k'}\rangle\leftarrow G(k')$ $s$ times.
If $r=0$, $\cA'$ returns $(|\phi_k\rangle^{\otimes s},|\phi_{k'}\rangle^{\otimes s})$ to the query of $\cA$.
If $r=1$, $\cA'$ returns $(|\phi_{k'}\rangle^{\otimes s},|\phi_k\rangle^{\otimes s})$ to the query of $\cA$.
\item
$\cA$ sends $m\in\bit$ to $\cA'$.
\item
If $r=m$, $\cA'$ aborts.
If $r\neq m$, $\cA'$ sends $k'$ to $\cA$.
\item
$\cA$ sends $\sigma$ to $\cA'$.
\item
$\cA'$ sends $\sigma$ to $\cC'$.
\item
$\cC'$ measures $|\phi_k\rangle$ with $\{|\phi_\sigma\rangle\langle\phi_\sigma|,I-|\phi_\sigma\rangle\langle\phi_\sigma|\}$.
If the result is $|\phi_\sigma\rangle\langle\phi_\sigma|$, $\cC'$ outputs 1.
Otherwise, $\cC'$ outputs 0. 
\end{enumerate}
By a straightforward calculation, which is given below, 
\begin{eqnarray}
\Pr[\cC'\to1]=\frac{1}{2}\Pr[\mathsf{Exp}=1].
\label{exp1}
\end{eqnarray}
Therefore, 
if $\Pr[\mathsf{Exp}=1]$ is non-negligible, 
$\Pr[\cC'\to1]$ is also non-negligible,
which means that $\cA'$ breaks
the security of $G$.

Let us show Eq.~(\ref{exp1}).
In fact,
\begin{eqnarray*}
\Pr[\cC'\to1]&=&\frac{1}{2^{2n}}\sum_{k,k'\in\bit^n}\frac{1}{2}\Pr[1\leftarrow \cA(|\phi_k\rangle^{\otimes s},|\phi_{k'}\rangle^{\otimes s})]\Pr[\sigma\leftarrow \cA(k')]|\langle\phi_\sigma|\phi_k\rangle|^2\\
&&+\frac{1}{2^{2n}}\sum_{k,k'\in\bit^n}\frac{1}{2}\Pr[0\leftarrow \cA(|\phi_{k'}\rangle^{\otimes s},|\phi_k\rangle^{\otimes s})]\Pr[\sigma\leftarrow \cA(k')]|\langle\phi_\sigma|\phi_k\rangle|^2\\
&=&\frac{1}{2^{2n}}\sum_{k,k'\in\bit^n}\frac{1}{2}\Pr[1\leftarrow \cA(|\phi_k\rangle^{\otimes s},|\phi_{k'}\rangle^{\otimes s})]\Pr[\sigma\leftarrow \cA(k')]|\langle\phi_\sigma|\phi_k\rangle|^2\\
&&+\frac{1}{2^{2n}}\sum_{k,k'\in\bit^n}\frac{1}{2}\Pr[0\leftarrow \cA(|\phi_{k}\rangle^{\otimes s},|\phi_{k'}\rangle^{\otimes s})]\Pr[\sigma\leftarrow \cA(k)]|\langle\phi_\sigma|\phi_{k'}\rangle|^2\\
&=&\frac{1}{2}\Pr[\mathsf{Exp}=1].
\end{eqnarray*}
\end{proof}

\begin{remark}
For simplicity, we have assumed that the OWSG $G$ does not have any ancilla state.
We can extend the result to the case when $G$ has ancilla states.
In that case, the verification algorithm in our construction of digital signatures is modified as follows:
Given $\sigma$, first generate
$U_\sigma|0...0\rangle=|\phi_\sigma\rangle\otimes|\eta_\sigma\rangle$.
Then run $U_\sigma^\dagger$ on $pk_m\otimes|\eta_\sigma\rangle$,
and measures all qubits in the computational basis.
If all results are zero, output $\top$.
Otherwise, output $\bot$.
It is easy to check that a similar proof holds for the security of this generalized version.
\end{remark}

\ifnum\llncs=1
\bibliographystyle{myalpha} 
\bibliography{abbrev3,crypto,reference}
\else
\ifnum\arxiv=1
\newcommand{\etalchar}[1]{$^{#1}$}

\else
\bibliographystyle{alpha} 
\bibliography{abbrev3,crypto,reference}
\fi
\fi

\ifnum\cameraready=1
\else
\appendix

	\ifnum\submission=1
	\newpage
	 	\setcounter{page}{1}
 	{
	\noindent
 	\begin{center}
	{\Large SUPPLEMENTAL MATERIALS}
	\end{center}
 	}
	\setcounter{tocdepth}{2}

	\else
	
	\fi
\fi

\ifnum\cameraready=1
\else
\ifnum\submission=1
\newpage
\setcounter{tocdepth}{1}
\tableofcontents
\else
\fi
\fi

\appendix
\section{Making Opening Message Classical}
\label{sec:classical_opening}
In this Appendix, we show that general quantum non-interactive commitments can be
modified so that the opening message is classical.

Let us consider the following general non-interactive quantum commitments:
\begin{itemize}
    \item {\bf Commit phase:}
    The sender who wants to commit $b\in\bit$ generates a certain state $|\psi_b\rangle_{RC}$ on 
    the registers $R$ and $C$. The sender sends the register $C$ to the receiver.
    \item {\bf Reveal phase:}
The sender sends $b$ and the register $R$ to the receiver. 
 The receiver runs a certain verification algorithm on the registers $R$ and $C$.
\end{itemize}
Let us modify it as follows:
\begin{itemize}
    \item {\bf Commit phase:}
    The sender who wants to commit $b\in\bit$ chooses $x,z\leftarrow\bit^{|R|}$, and 
    generates the state 
    \begin{eqnarray*}
    [(X^xZ^z)_R\otimes I_C]|\psi_b\rangle_{RC},
    \end{eqnarray*}
    where $|R|$ is the size of the register $R$.
    The sender sends the registers $R$ and $C$ to the receiver.
    \item {\bf Reveal phase:}
The sender sends $(x,z)$ and $b$ to the receiver. 
 The receiver applies 
 $(X^xZ^z)_R\otimes I_C$ on the state, and
 runs the original verification algorithm on the registers $R$ and $C$.
\end{itemize}

\begin{theorem}
If the original commitment scheme is computationally hiding and statistically sum-binding,
then the modified commitment scheme is also
computationally hiding and statistically sum-binding.
\end{theorem}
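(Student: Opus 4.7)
The plan is to prove each of the two properties via a direct black-box reduction to the same property of the original scheme, with no loss in either reduction.

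For \textbf{computational hiding}, the key structural observation is that uniformly averaging $X^xZ^z$ over $x,z\in\bit^{|R|}$ applied to register $R$ implements the completely depolarizing channel on $R$. Applied via Lemma~\ref{lemma:QOTP} to the bipartite state $|\psi_b\rangle\langle\psi_b|_{RC}$, this means the receiver's view at the end of the modified commit phase is exactly $\frac{I_R}{2^{|R|}}\otimes \rho_b^C$, where $\rho_b^C\coloneqq\mbox{Tr}_R(|\psi_b\rangle\langle\psi_b|_{RC})$ is precisely the receiver's view at the end of the original commit phase. So given any QPT distinguisher $\cA$ against hiding of the modified scheme, I build a distinguisher $\cA'$ against hiding of the original scheme which, on input $\rho_b^C$, locally prepares a fresh $|R|$-qubit maximally mixed state, tensors it onto the $R$ register, and feeds the result to $\cA$. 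The input $\cA$ sees is identical to its view in the modified hiding game, so $\cA'$ inherits $\cA$'s distinguishing advantage, contradicting computational hiding of the original scheme.

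For \textbf{statistical sum-binding}, the reduction goes in the opposite direction. Given an unbounded cheating sender $\cA$ against the modified scheme with $p_0+p_1$ noticeably above $1$, I construct a cheating sender $\cA^*$ against the original scheme that achieves the same $p_0$ and $p_1$. $\cA^*$ prepares the same initial joint state on registers $(\widetilde E,R,C)$ as $\cA$ would and plays the role of the original sender by sending only $C$ in the commit phase, retaining $\widetilde E\otimes R$ as its internal side register (which takes on the role of the ``$E$'' register in the original sum-binding game). To open bit $b$, any cheating strategy of $\cA$ in the modified scheme, since $R$ and $C$ have both already left the sender, must be realized by some $b$-dependent operation on $\widetilde E$ alone that outputs classical $(x,z)$; $\cA^*$ coherently simulates this procedure on $\widetilde E$ via Stinespring dilation, then applies $(X^xZ^z)_R$ controlled on the resulting classical register, and finally sends $(b,R)$ to the original-scheme receiver. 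The original receiver's measurement $\{|\psi_b\rangle\langle\psi_b|,I-|\psi_b\rangle\langle\psi_b|\}$ on $RC$ then has the same acceptance probability as in the modified game, where the receiver itself would have applied $(X^xZ^z)_R$ before the identical projective measurement. Hence $p_b^{\mathrm{orig}}(\cA^*)=p_b^{\mathrm{mod}}(\cA)$ for each $b\in\bit$, and the sum-binding bound transfers exactly.

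I do not expect a serious obstacle. The main subtlety is simply pinning down the form of a general modified-scheme cheating sender---namely, that the sender must commit a quantum state on $RC$ and can subsequently only produce classical data from its remaining register---and then checking that the order of operations ``sender applies Pauli on $R$ / receiver undoes Pauli on $R$'' can be collapsed into ``sender applies the Pauli on $R$ itself right before handing it over'' without altering the measurement statistics on $RC$. Once this bookkeeping is done, both reductions are tight and the theorem follows.
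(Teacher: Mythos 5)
Your proposal is correct and follows essentially the same route as the paper: for hiding, observe via the quantum one-time pad that the modified receiver's post-commit state is $\frac{I^{\otimes|R|}}{2^{|R|}}\otimes\mathrm{Tr}_R(|\psi_b\rangle\langle\psi_b|_{RC})$ and reduce to the original hiding game by locally appending the maximally mixed register; for binding, convert a modified-scheme cheating sender into an original-scheme one that retains $R$, computes $(x,z)$ itself, applies the Pauli on $R$ before handing it over, and note the receiver measures the identical state. Your write-up is somewhat more explicit than the paper's (e.g., the Stinespring-dilation bookkeeping for coherently producing $(x,z)$), but the argument is the same.
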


\begin{proof}
Let us first show hiding. Hiding is clear because what the receiver has after the commit phase
in the modified scheme
is $\frac{I^{\otimes |R|}}{2^{|R|}}\otimes \mbox{Tr}_R(|\psi_b\rangle\langle\psi_b|_{RC})$, which is the same as that in the
original scheme.

Next let us show binding. Biding is also easy to understand. 
The most general action of a malicious sender in the modified scheme is as follows.
\begin{enumerate}
\item
The sender generates a state $|\Psi\rangle_{ERC}$ on the three registers $E$, $R$, and $C$.
The sender sends the registers $R$ and $C$ to the receiver.
\item
Given $b\in\bit$, the sender computes $(x,z)\in\bit^{|R|}\times\bit^{|R|}$. The sender sends $(x,z)$ and $b$ to the receiver.
\item
The receiver applies $X^xZ^z$ on the register $R$.
\item
The receiver runs the verification algorithm on the registers $R$ and $C$.
\end{enumerate}
Assume that this attack breaks sum-binding of the modified scheme.
Then we can construct an attack that breaks sum-binding of the original scheme as follows:
\begin{enumerate}
\item
The sender generates a state $|\Psi\rangle_{ERC}$ on the three registers $E$, $R$, and $C$.
The sender sends the register $C$ to the receiver.
\item
Given $b\in\bit$, the sender computes $(x,z)$ and applies $X^xZ^z$ on the register $R$.
The sender sends the register $R$, $b$, and $(x,z)$ to the receiver.
\item
The receiver runs the verification algorithm on the registers $R$ and $C$.
\end{enumerate}
It is easy to check that the two states on which the receiver applies the
verification algorithm are the same.
\end{proof}
\section{Equivalence of Binding Properties}
\label{sec:AQY_binding}
In this paper, we adopt sum-binding (\cref{definition:sumbinding}) as a definition of binding property of commitment schemes.  
On the other hand, the concurrent work by Ananth et al.~\cite{AQY21} introduces a seemingly stronger definition of binding, which we call AQY-binding, and shows that their commitment scheme satisfies it. 
The advantage of the AQY-binding is that it naturally fits into the security analysis of oblivious transfer in~\cite{C:BCKM21b}.
That is, a straightforward adaptation of the proofs in~\cite{C:BCKM21b} enables us to prove that a commitment scheme satisfying AQY-binding and computational hiding implies the existence of oblivious transfer and multi-party computation (MPC). 
Combined with their construction of an AQY-binding and computational hiding commitment scheme from PRSGs, they show that PRSGs imply oblivious transfer and MPC.

We found that it is already implicitly shown in~\cite{FUYZ20} that the sum-binding and AQY-binding are equivalent for non-interactive commitment schemes in a certain form called the \emph{generic form} as defined in~\cite{YWLQ15,Yan20,FUYZ20}.\footnote{We remark that it is also noted in \cite[Remark 6.2]{AQY21} that they are ``probably equivalent".} 
Since our commitment scheme is in the generic form, we can conclude that our commitment scheme also satisfies AQY-binding, and thus can be used for constructing oblivious transfer and MPC based on~\cite{C:BCKM21b}. We explain this in more detail below.

\smallskip
\noindent\textbf{Commitment schemes in the general form.}
We say that a commitment scheme is in the general form if it works as follows over registers $(C,R)$. 
\begin{enumerate}
    \item  
    In the commit phase, for generating a commitment to $b\in \bit$, the sender applies a unitary $Q_b$ on $\ket{0...0}_{C}\otimes\ket{0...0}_{R}$ and sends the $C$ register to the receiver.
    \item In the reveal phase, the sender sends the $R$ register along with the revealed bit $b$. 
    Then, the receiver applies $Q_b^\dagger$, measures both $C$ and $R$ in the computational basis, and accepts if the measurement outcome is $0...0$.
\end{enumerate}
See~\cite[Definition 2]{Yan20} for the more formal definition. 
Yan~\cite[Theorem 1]{Yan20} showed that for commitment schemes in the general form, the sum-binding is equivalent to the \emph{honest-binding}, which means $F(\sigma_0,\sigma_1)=\negl(\lambda)$, where $F$ is the fidelity
and $\sigma_b$ is the honestly generated commitment to $b$ for $b\in \bit$, i.e., $\sigma_b\coloneqq {\rm Tr}_R(Q_b|0...0\rangle\langle0...0|_{RC} Q_b^\dagger)$. 

\smallskip
\noindent\textbf{AQY-binding.}
Roughly speaking, the AQY-binding requires that there is an (inefficient) extractor $\mathcal{E}$ that extracts a committed message from the commitment and satisfies the following:
We define the following two experiments between a (possibly dishonest) sender and the honest receiver:
\begin{description}
\item[Real Experiment:]
In this experiment, the sender and receiver run the commit and reveal phases, and the experiment returns the sender's final state $\rho_S$ and the revealed bit $b$, which is defined to be $\bot$ if the receiver rejects.
\item[Ideal Experiment:]
In this experiment, after the sender sends a commitment, the extractor $\mathcal{E}$ extracts $b'$ from the commitment. After that, the sender reveals the commitment and the receiver verifiers it. Let $b$ be the revealed bit, which is defined to be $\bot$ if the receiver rejects.
The experiment returns $(\rho_S,b)$ if $b=b'$ and otherwise $(\rho_S,\bot)$ where $\rho_S$ is sender's final state. 
\end{description}
Then we require that for any (unbounded-time) malicious sender, outputs of the real and ideal experiments are statistically indistinguishable. 
See~\cite[Definition 6.1]{AQY21} for the formal definition. 

\smallskip
\noindent\textbf{Sum-binding and AQY-binding are equivalent.}

First, it is easy to see that AQY-binding implies sum-binding. By the AQY-binding, we can see that a malicious sender can reveal a commitment to $b\in \bit$ only if $\mathcal{E}$ extracts $b$ except for a negligible probability. Moreover, it is clear that $\Pr[\mathcal{E} \text{~extracts~} 0]+\Pr[\mathcal{E} \text{~extracts~} 1]\leq 1$ for any fixed commitment. 
Thus, the sum-binding follows.

We observe that the other direction is implicitly shown in \cite{FUYZ20} as explained below. 
As already mentioned, the sum-binding is equivalent to honest-binding. 
For simplicity, we start by considering the case of perfectly honest-binding, i.e., $F(\sigma_0,\sigma_1)=0$. 

First, as shown in \cite[Corollary 4]{FUYZ20}, there is an (inefficient) measurement $(\Pi_0,\Pi_1)$ that perfectly distinguishes $\sigma_0$ and $\sigma_1$ since we assume $F(\sigma_0,\sigma_1)=0$. Then, we can define the extractor $\mathcal{E}$ for the AQY-binding as an algorithm that just applies the measurement  $(\Pi_0,\Pi_1)$ and outputs the corresponding bit $b$.
It is shown in \cite[Lemma 6]{FUYZ20} that the final joint state over sender's and receiver's registers does not change even if we apply the measurement $(\Pi_0,\Pi_1)$ to the commitment before the reveal phase conditioned on that the receiver accepts. 
In the case of rejection, note that the revealed bit is treated as $\bot$ in the experiment for the AQY-binding.
Moreover, the measurement on the commitment register does not affect sender's final state since no information is sent from the receiver to the sender.  
By combining the above observations,  the joint distribution of the sender's final state and the revealed bit does not change even if we measure the commitment in $(\Pi_0,\Pi_1)$.
This means that the AQY-binding is satisfied.

For the non-perfectly honest-binding case, i.e., $F(\sigma_0,\sigma_1)=\negl(\lambda)$, we can rely on the perturbation technique. 
It is shown in \cite[Lemma 8]{FUYZ20} that for a non-perfectly honest-binding commitments characterized by unitaries $(Q_0,Q_1)$, there exist unitaries $(\tilde{Q}_0,\tilde{Q}_1)$ that characterize a perfectly honest-binding commitment scheme and are close to  $(Q_0,Q_1)$ in the sense that replacing $(Q_0,Q_1)$ with $(\tilde{Q}_0,\tilde{Q}_1)$ in any experiment only negligibly changes the output as long as the experiment calls  $(Q_0,Q_1)$ or $(\tilde{Q}_0,\tilde{Q}_1)$ polynomially many times. 
By using this, we can reduce the AQY-binding property of non-perfectly honest-binding commitment schemes to that of perfectly honest-binding commitment schemes with a negligible security loss.

\end{document}